\documentclass[11pt]{article}
\usepackage[letterpaper, margin=1in]{geometry}
\usepackage{amsmath, amssymb, amsthm}
\usepackage[hybrid,hashEnumerators,smartEllipses]{markdown}
\usepackage{bbm}
\usepackage{graphicx}
\usepackage{url}
\usepackage{hyperref}
\usepackage[capitalise]{cleveref}
\usepackage{algorithm}
\usepackage[noend]{algpseudocode}
\usepackage{subcaption}
\usepackage{mathrsfs}
\usepackage{xcolor}
\usepackage{xspace}
\usepackage{tikz}
\usetikzlibrary{decorations.pathreplacing}
\usetikzlibrary{positioning}

\newtheorem{theorem}{Theorem}[section]
\newtheorem{corollary}[theorem]{Corollary}
\newtheorem{lemma}[theorem]{Lemma}

\newtheorem{definition}[theorem]{Definition}
\newtheorem{claim}[theorem]{Claim}

\newtheorem{example}[theorem]{Example}

\crefname{claim}{Claim}{Claims}
\Crefname{claim}{Claim}{Claims}

\usepackage{fancyhdr}

\newcommand{\remove}[1]{}
\newcommand{\ignore}[1]{}
\newcommand{\mc}{\mathcal}

\newcommand{\cost}{\mathrm{cost}}

\newcommand{\OPT}{\mathrm{OPT}}
\newcommand{\Match}{\mathrm{Match}}

\def\R{\mathbb{R}}

\newcommand{\E}{\mathbb E}

\usepackage{enumitem}
\hypersetup{
    colorlinks=true,
    linkcolor=black,
    filecolor=blue,
    urlcolor=blue,
    citecolor=blue,
    pdfpagemode=FullScreen,
}

\newcommand{\rad}{\mathrm {rad}}
\usepackage[natbib=true, url=false, style=alphabetic, doi=false, isbn=false, maxbibnames=9,maxalphanames=4, minalphanames=3]{biblatex}

\AtEveryBibitem{%
  \clearfield{url}%
  \clearfield{urldate}%
  \clearname{editor}%
  \clearfield{eprint}
}
\AtEveryBibitem{%
  \clearlist{language}
  \clearfield{location}
}
\newcommand{\mov}{\mathrm{move}}
\newcommand{\clust}{\mathrm{cluster}}
\newcommand{\Clust}{\mathrm{Cluster}}
\renewcommand{\tau}{u}
\renewcommand{\d}{\mathrm{d}}
\title{Competitive Online Clustering with Movement Cost}
\author{Christian Coester\footnote{Funded by the European Union (ERC, CCOO, 101165139).
Views and opinions expressed are however those of the author(s) only and do not necessarily reflect
those of the European Union or the European Research Council. Neither the European Union nor
the granting authority can be held responsible for them. Email:~\texttt{christian.coester@cs.ox.ac.uk}.}\\University of Oxford \and Yichen Huang\footnote{Supported in part by NSF grant CNS-2107078. Email: \texttt{yichenhuang@g.harvard.edu}.}\\Harvard University\and Michael Mitzenmacher\footnote{Supported in part by NSF grant CNS-2107078. Email: \texttt{michaelm@eecs.harvard.edu}.}\\Harvard University}
\begin{document}
\maketitle
\pagenumbering{gobble}

\begin{abstract}
We study the problem of online clustering with movement cost, which generalizes the fundamental $k$-server problem. At each time step, the algorithm is given a set of points in a metric space and must choose $k$ cluster centers. The cost incurred is the sum of the clustering cost at each time step and the total distance that the centers move over time. Prior work established competitive algorithms for the case $k\in\{1,2\}$ on the line metric with the $k$-median clustering objective;   under the $k$-median objective, this problem is also known as facility reallocation. We substantially generalize these results to arbitrary $k$, general metric spaces, and a broad class of clustering objectives.

Specifically, we consider clustering costs defined as a norm of the distance vector from points to their nearest center, which subsumes the $k$-median, $k$-center, and (square root of) $k$-means objectives, and provide an asymptotically optimal $O(k)$-competitive randomized algorithm against adaptive online adversaries. The algorithm is based on a reduction to the $k$-server problem, establishing a tight connection between the problems. In this reduction, $k$-server requests are generated by Poisson processes, whose arrival rates evolve in continuous time and are driven by a dual representation of the clustering costs. Our method of using Poissonization in this context may also be of independent interest. We demonstrate another use of the idea by deriving similar reductions for batched set chasing problems.

We also consider a different family of clustering objectives defined as a norm of the vector of cluster radii, which subsumes the sum-$k$-radii and the $k$-center objectives, and give a tight $O(k)$-competitive deterministic algorithm on the line.
\end{abstract}

\newpage

\pagenumbering{arabic}
\section{Introduction}
\label{sec: intro}

Clustering is a fundamental question in algorithms and machine learning, with broad applications in data summarization, facility location, resource allocation, and beyond~\cite{CharikarG99,lin92approximation,AryaGKMMP01,CharikarGuhaTardosShmoys99,Young00,JainVazirani01,KumarSS10,Kumar12,CharikarLi12,LiSvensson16,AlamdariShmoys17}. In a clustering problem, the algorithm is given a set of points (``requests'') and must select $k$ centers so that the requests are well served by the centers according to some objective (e.g., minimizing the total distance from requests to their nearest center).

In many modern settings, however, requests are not static: they arrive and depart over time, and their locations may evolve~\cite{EisenstatMS14,FotakisKZ21,TantipathananandhBK07,AnNS17}. Thus, an algorithm might choose to update the locations of the centers, leading to extra costs proportional to the distance moved~\cite{FOTAKIS202113, Christou23Efficient, CompetitiveConsistent, LackiHGJR24, ForsterS25}. For example, mobile disease-testing units may relocate as new outbreak hotspots emerge, food trucks may change locations to adapt to expected demands (e.g. events, lunch hours, nightlife), and edge-computing resources may be repositioned and migrated as user demand shifts across regions~\cite{karanik, liang2023}. This forces a trade-off between the \emph{quality} of the clustering at each time and the \emph{stability} of the chosen centers over time, and more challengingly, the algorithm must seek the balance without knowing the future requests.

Reasoning about movement without knowledge of future requests is a central theme in online algorithms. A prominent relevant example is the $k$-server problem~\cite{MMS88}: $k$ servers are located at points in a metric space $(X,d)$. At each time $t$, a request appears at a point $x_t \in X$, and the algorithm must move a server to $x_t$. The cost is the total distance traveled by the servers. The performance of an online algorithm is measured by its \emph{competitive ratio}, defined as the worst-case ratio between the cost of the algorithm and that of an omniscient optimum who knows the sequence in advance. The $k$-server problem has long been one of the most important problems in online algorithms and has led to many deep insights~\cite{CKP91new,KP95,Bubeck18,BubeckCR23,CoesterKZ26}.

In this paper, we study the following simultaneous generalization of the clustering problem and the $k$-server problem. We give the most general definition here, and later specify the assumptions on the cost functions used in our results.

\begin{definition}[Online Clustering with Movement Cost (CMC)]
\label{def:problem}
    Let $(X, d)$ be a metric space with $k$ centers initially located at a set $C_0$. At each round $t$, the algorithm observes a request multiset $R_t\subseteq X$ and chooses a set $C_t\subseteq X$ of $k$ centers. The cost incurred in round $t$ is
\[
    c_t \;=\; \underbrace{\Clust_t(R_t, C_t)}_{\text{clustering}} \;+\;
    \underbrace{\Match(C_{t-1}, C_t)}_{\text{movement}},
\]
where $\Match(C_{t-1}, C_t)$ denotes the minimum matching distance between the configurations $C_{t-1}$ and $C_t$ under the metric $d$. The clustering cost $\Clust_t$ is flexible so as to capture different objectives.
\end{definition}

The CMC framework captures $k$-server as a special case, since a $k$-server request at $x_t$ can be simulated by repeated requests to the singleton set $\{x_t\}$, thereby forcing any competitive algorithm to move a center\footnote{A server in the $k$-server instance corresponds to a center in a CMC instance.} to $x_t$. Conversely, \cite{bubeck21online} showed that CMC with the $k$-median objective and \emph{singleton} request sets can be reduced to $k$-server as well. They further highlighted that the problem can be viewed in the line of chasing (well-behaved) convex functions, but in a multi-server rather than the usual single-server setting.

When request sets are not singletons, the situation becomes more complex. The centers must minimize the distance to multiple requests simultaneously, possibly more than the number of centers when $|R_t|>k$. For this more general setting, prior work established competitive online algorithms only for the case $k \le 2$ when the metric space is the line and the clustering cost is the $k$-median objective~\cite{FOTAKIS202113,de_keijzer_facility_2022}, referred to as facility reallocation on the line. We substantially generalize these results with respect to $k$, the metric space, and the clustering objective. Specifically, we consider the following two fairly broad families of clustering costs.

\paragraph{Norm-of-distances.} Perhaps the most natural clustering cost is a norm of the vector of distances from each request to its nearest center. This subsumes the $k$-median objective (sum of distances) via the $\ell_1$-norm, the $k$-center objective (maximum distance) via the $\ell_\infty$-norm, and the square root of the $k$-means objective via the $\ell_2$-norm. In the clustering context, the norm should naturally be \emph{monotone}, i.e., if $0 \le u_i \le v_i$ for all coordinates $i$, then $\|u\| \le \|v\|$, and we will henceforth assume monotonicity. We also show in \cref{thm:non-monotone-lower} that no competitive algorithm exists without the monotonicity assumption, even when $k=1$.

\paragraph{Norm-of-radii.} An alternative notion is that the costs are associated with the clusters as a whole instead of individual requests, and we consider the clustering cost to be a norm of the vector of cluster radii. This subsumes the sum-$k$-radii objective~\cite{Fotakis2011OnlineSC,charikar2001cluster} via the $\ell_1$-norm and the $k$-center objective via the $\ell_\infty$-norm.\footnote{Note that for the $\ell_\infty$-norm, the norm-of-distances model and the norm-of-radii model are identical.} For this objective, in addition to monotonicity, we require the norm to be symmetric, i.e., invariant under permutation of centers.

Note that in the norm-of-radii model, the cost is determined not only by the center locations but also by the association of requests to centers. Our results for this model are specific to the line metric, and we show in Appendix~\ref{appendix: association} that on the line it is optimal up to a constant factor to associate each request with the closest center, so we may assume this association at the cost of a constant factor loss. For other metrics, one could alternatively study a model where the association is chosen by the algorithm; we discuss this matter (and how nearest association can be suboptimal in general metrics) in Appendix~\ref{appendix: association}.

\paragraph{Adversary Models.} The performance of an online algorithm is measured by the competitive ratio against an adversary who generates the requests and also serves them. We consider both oblivious and adaptive online adversaries~\cite{BenDavidBKTW94}. An oblivious adversary fixes the entire request sequence $\sigma$ in advance, before seeing the algorithm's randomness. An adaptive online adversary may choose each request based on the algorithm's past actions, and has to serve it immediately (and thus may not be optimal in hindsight). For deterministic algorithms, the two types of adversaries are equivalent.

\subsection{Our Results and Techniques}

We show that, for randomized algorithms against adaptive online adversaries, CMC with the norm-of-distances objective is no harder than the $k$-server problem.

\begin{theorem}[Norm-of-Distances, Randomized, General Metrics]
\label{thm:any-norm-intro}
There is a randomized $O(k)$-competitive algorithm for online CMC in the norm-of-distances model against adaptive online adversaries.
\end{theorem}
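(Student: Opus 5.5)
The plan is to reduce to $k$-server and then invoke a known deterministic $O(k)$-competitive algorithm for $k$-server on general metrics, namely the work function algorithm ($2k-1$-competitive). Determinism of the $k$-server subroutine is what allows the final guarantee to hold against adaptive online adversaries: the only randomness will come from the reduction itself.

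\textbf{Step 1: dual representation.} For a monotone norm, duality gives
\[
\|v\| = \max_{w\ge 0,\ \|w\|_*\le 1} \langle w, v\rangle .
\]
For a fixed configuration $C$, the clustering cost is therefore $\max_{w} \sum_{r\in R_t} w_r\, d(r,C)$, a maximum of weighted $k$-median costs. I would process round $t$ in a continuous time interval $[0,1]$. During this interval the algorithm maintains a dual vector $w(s)$ that is optimal for the current center configuration $C(s)$. Each request point $r$ then emits $k$-server requests according to a Poisson process of rate $\lambda w_r(s)$, where $\lambda$ is a scaling parameter. Each emitted point is fed to the $k$-server algorithm, which moves a center onto it, and the CMC algorithm uses the resulting configuration as its centers.

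\textbf{Step 2: upper bound on the algorithm's cost.} Within a round, the clustering cost at time $s$ equals $\sum_r w_r(s)\, d(r,C(s))$. I would relate this to the expected movement the $k$-server algorithm performs. If the requests arrive at rate $\lambda w_r$ at $r$ and each arrival forces some server to travel roughly $d(r,C)$, then the expected movement per unit time is about $\lambda \sum_r w_r\, d(r,C)$, i.e.\ $\lambda$ times the clustering cost. Hence the algorithm's cost, clustering plus movement, is at most $O(1/\lambda+1)$ times the $k$-server cost. The final configuration at the end of the round, or a time average of the configurations, is what gets reported, and the clustering cost is charged via integration over the round. Setting $\lambda = 1$ should suffice.

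\textbf{Step 3: lower bound on the adversary's cost.} The adversary's $k$-server cost on the generated request sequence must be bounded by $O(1)$ times the adversary's CMC cost. Fix the adversary's configuration $C^*_t$ in round $t$. Serving the Poisson requests with servers that sit at $C^*_t$ and make excursions costs, in expectation, $2\lambda \sum_r w_r(s)\, d(r,C^*_t)$ per unit time. Since $w(s)$ is feasible in the dual, this is at most $2\lambda \|d(R_t,C^*_t)\|$. So the expected excursion cost over the round is at most $2\lambda$ times the adversary's clustering cost, and the adversary's movement transfers directly.

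\textbf{Main obstacle.} The hard part is Step 2, because the $k$-server algorithm need not move a server a distance of $d(r,C)$ when a request arrives. Its movement is only amortized, via the work function potential, so the local identity "movement rate equals clustering cost" does not hold pointwise. My first attempt would instead bound the online clustering cost by $O(1)$ times the online $k$-server cost by a different route. Any arrival at $r$ forces the server configuration to cover $r$, so the movement is at least $d(r,C(s^-))$. In expectation this makes the movement at least $\lambda\int \sum_r w_r\, d(r,C(s))\,ds$, using the compensator of the Poisson process together with the fact that $w$ and $C$ are predictable. The inequality goes in the right direction: clustering is at most movement$/\lambda$.

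There are two further technical points. First, $w(s)$ must be predictable, chosen from $C(s^-)$, for the martingale argument to go through. Second, the adversary is adaptive, so Step 3 has to be checked against the randomness of the generated requests, with the comparison made in expectation against the adversary's online $k$-server strategy. Combining the steps gives a total cost of $O(1)\cdot(2k-1)\cdot O(1)=O(k)$ times the adversary's cost.
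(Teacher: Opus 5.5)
Your construction is the paper's \textsc{ShadowServe}. It uses Poisson arrivals at each $r\in R_t$ with rate given by a nonnegative norming functional of the current distance vector $(d(r,S(\tau)))_r$. The online bound comes from the fact that an arrival at $r$ forces movement at least $d(r,S(\tau^-))$; your compensator argument is a rigorous form of the paper's infinitesimal one. The adversary bound uses the excursion strategy around $C_t^*$ together with $\langle z,d^*_t\rangle\le\|z\|_*\|d^*_t\|$, and the last step invokes the work function algorithm. Using a deterministic $\mc A$ is fine. The paper states the reduction for any $\mc A$ competitive against adaptive online adversaries, which is what is needed because the shadow stream depends on $\mc A$'s configuration.

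The one step that does not go through as written is the choice of the reported configuration $C_t$. Your integration argument bounds $\E\bigl[\int_{t-1}^{t}\|d(\tau)\|\,\d\tau\bigr]$, which is the time average of the clustering cost along the trajectory. It does not bound the cost $\|d(t)\|$ of the configuration at the end of the round.

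A fixed-time value of the process need not be controlled by its time average. The only generic comparison is
\[
\|d(t)\|\le\|d(\tau)\|+\|\mathbf 1\|\cdot\Match(S(\tau),S(t)),
\]
and this loses a factor $\|\mathbf 1\|$, which is unbounded (it equals $|R_t|$ for $\ell_1$). Your alternative, a ``time average of configurations'', is not defined in a general metric space.

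The paper instead reports $C_t=S(\tau_t)$ with $\tau_t\in\arg\min_{\tau\in(t-1,t]}\|d(\tau)\|$. This gives $\|d(\tau_t)\|\le\int_{t-1}^{t}\|d(\tau)\|\,\d\tau$ pointwise. A uniformly random $\tau_t\in(t-1,t]$, independent of the arrivals, would also work in expectation.

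Once $C_t$ is no longer the end-of-round configuration, the movement bound is not automatic, so you should state it explicitly. Since $\tau_1<\tau_2<\cdots$, the reported configurations lie in chronological order on $\mc A$'s trajectory. The triangle inequality then gives $\mov\le M^{\mc A}$ in every realization.

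With this fix your argument yields
\[
\E[\cost]\le 2\E[M^{\mc A}]\le 2\alpha\E[M^*]+2\beta\le 4\alpha\E[\cost^*]+2\beta .
\]
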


The bound is tight due to a lower bound of $k$ for the $k$-server problem~\cite{MMS88} (which also holds for randomized algorithms against adaptive online adversaries~\cite{Raghavan1994MemoryVR}). It is the first competitive algorithm for the problem for general $k$. This answers an open question from~\cite{FOTAKIS202113}, whose previous online algorithm only handled the case $k = 2$ on the line with the $k$-median objective.

By the relation between deterministic competitiveness and randomized competitiveness against adaptive online adversaries~\cite{BenDavidBKTW94}, \Cref{thm:any-norm-intro} also implies, non-constructively, the existence of an $O(k^2)$-competitive deterministic algorithm.

We obtain our result by a reduction to the $k$-server problem. A novelty of our analysis is a \emph{Poissonization} step employed in the reduction: we convert each per-round batch of requests in online CMC into a collection of independent Poisson arrival processes at the request locations, with instantaneous rates selected according to the current contributions of individual requests. A $k$-server algorithm then serves the request points generated by the Poisson processes, and our CMC algorithm follows a sequence of selected configurations of the $k$-server algorithm.

Poissonization (roughly, replacing a fixed number of events by a Poisson process) is, of course, a known technique for simplifying dependencies~\cite{JacquetSzpankowski1998AnalyticalDepoissonization,Kingman1993PoissonProcesses,MitzenmacherUpfal2017ProbabilityAndComputing}, and it has also been used, for example, in developing algorithms for stochastic online matching (e.g., \cite{HuangShu2021PoissonArrivalsNaturalLP,JailletLu2014OnlineStochasticMatching,qiu2023improved}). A key difference is that those applications typically concern \emph{intrinsically stochastic} settings (e.g., Poisson arrivals as in stochastic matching or secretary/prophet-type models), where randomness is part of the input. In contrast, our CMC problem is deterministic, and we use Poissonization to isolate each request's marginal contribution and aggregate these contributions independently into a randomized reduction. From this perspective, it plays a role somewhat reminiscent of rounding: the marginal contribution specifies fractional request intensities, and the Poisson process converts them into an integral request stream, in a spirit similar to~\cite{AnNS17}.

Moreover, unlike usual settings with fixed Poisson rates, the instantaneous rate assigned to each request $r$ evolves in continuous time and is \emph{state-dependent}, determined by its norming functional for the current clustering cost. This continuous-time Poissonization transforms each discrete batch into a time-dispersed stream that allows relating the clustering norm to the expected infinitesimal movement of the induced $k$-server process. We believe this idea may be of interest more broadly for ``linearizing'' batched objectives into continuous-time online processes.

As a concrete illustration, we generalize the reduction to set chasing problems~\cite{Chrobak1991TheSP,Friedman93convex,bubeck22shortest, sellke20chase,ArgueGGT20,Bubeck19competitive}. In $\mc F$-chasing, where $\mc F$ is a family of subsets of a metric space $(X,d)$, an algorithm receives a set $F_t\in\mc F$ at each time $t$; the algorithm must move to some $x_t\in F_t$ and pays $d(x_{t-1},x_t)$. We consider the batched variant of these problems, in which the algorithm receives a collection $R_t\subseteq\mc F$ each time. It outputs an arbitrary $x_t\in X$ and pays $d(x_{t-1},x_t)\;+\;\|(d(x_t,F))_{F\in R_t}\|$ for some norm $\|\cdot\|$ that is monotone.\footnote{Here $(d(x_t,F))_{F\in R_t}$ denotes the vector whose coordinate for each $F\in R_t$ is the distance from $x_t$ to the set $F$, namely $d(x_t,F)=\min_{y\in F} d(x_t,y)$. Note that a soft penalty is necessary, since it may happen that $\bigcap_{F\in R_t} F=\varnothing$. The classical setting with a single set $F_t$ and the hard constraint $x_t\in F_t$ can be recovered by repeating the request $\{F_t\}$ many times.} We show that the reduction underlying Theorem~\ref{thm:any-norm-intro} extends to any set chasing problem, and also to competitive algorithms against oblivious adversaries in case $\|\cdot\|$ is a weighted $\ell_1$-norm.

\begin{theorem}
\label{thm: mss}
If there is a randomized $\alpha$-competitive algorithm for $\mc F$-chasing against adaptive online adversaries, then there is a randomized $O(\alpha)$-competitive algorithm for the batched $\mc F$-chasing problem against adaptive online adversaries. If $\|\cdot\|$ is a weighted $\ell_1$-norm, then the reduction also holds for competitive ratios against oblivious adversaries.
\end{theorem}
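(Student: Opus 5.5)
We reduce batched $\mc F$-chasing to $\mc F$-chasing with the Poissonization scheme sketched above for \cref{thm:any-norm-intro}. Let $A$ be the $\alpha$-competitive $\mc F$-chasing algorithm. Each round $t$ becomes a continuous time interval $[0,1]$. During it we run a $k$-server-like simulation: $A$ holds a current point $y(s)$, and each $F\in R_t$ carries an independent Poisson clock with rate $\lambda_F(s)$. When the clock of $F$ rings, $F$ is issued to $A$ as a (hard) request, and $A$ moves into $F$. The rates are chosen by duality. The cost vector is $v(s)=(d(y(s),F))_{F\in R_t}$. We pick a norming functional $w(s)$ in the dual unit ball, with $w\ge 0$ by monotonicity and $\langle w,v(s)\rangle=\|v(s)\|$, and set $\lambda_F(s)=w_F(s)$. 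Our batched algorithm outputs $x_t=y(s^*)$ for a suitable time. The cleanest choice is $s^*$ uniform in $[0,1]$, or equivalently we follow $A$'s trajectory and pay its movement; the lazy/derandomized version uses the time with the best clustering cost.

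\textbf{Upper bound on our cost via $A$'s cost.} The expected clustering cost is $\E\int_0^1\|v(s)\|\,ds=\E\int_0^1\sum_F w_F(s)\,d(y(s),F)\,ds$. This equals the expected total cost of the \emph{cheapest possible response} to the Poisson requests, since at rate $w_F$ a request $F$ arrives that forces movement at least $d(y(s),F)$. Hence the expected clustering cost is at most the expected movement of $A$ during round $t$. Our movement is at most $A$'s movement, by the triangle inequality along the trajectory. So our total expected cost is at most $2\,\E[\cost(A)]$. Because the rates depend on the current state, we must argue this via the compensator of the point process, or via a fine time discretisation with Bernoulli arrivals. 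The expected number of arrivals in $[s,s+ds]$ given the past is $\sum_F w_F(s)\,ds$, and the conditional expected jump is at least $\sum_F w_F d(y,F)\,ds$.

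\textbf{Lower bound on the adversary's cost.} We must construct an adversary for the induced $\mc F$-chasing instance whose cost is $O(\text{cost of the batched adversary})$. The batched adversary is at a point $z_t$ during round $t$. The induced adversary sits at $z_t$, and when $F$ arrives it jumps to the nearest point of $F$ and jumps back to $z_t$, paying $2d(z_t,F)$. Between rounds it moves $d(z_{t-1},z_t)$. Its expected cost in round $t$ is $\E\int_0^1 2\sum_F w_F(s)\,d(z_t,F)\,ds\le 2\,\E\int_0^1\|w(s)\|_*\,\|(d(z_t,F))_F\|\,ds\le 2\|(d(z_t,F))_F\|$, by Hölder for the dual norm. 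The induced adversary's cost is therefore at most twice the batched adversary's cost.

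Because the rates depend on $A$'s past randomness, the induced request sequence is adaptive. This adversary is, however, an adaptive \emph{online} adversary: it serves each request immediately at a location determined online from $z_t$. $A$'s guarantee therefore gives $\E[\cost(A)]\le\alpha\,\E[\cost(\text{adv})]+c$, and combining the bounds yields $O(\alpha)$. The additive constant needs care, for example by scaling so that it is negligible, or by assuming it is $0$ as in the standard definition up to the initial configuration. For the weighted $\ell_1$-norm the norming functional is the fixed weight vector, independent of $A$'s state. The rates are then constant, the request stream is generated independently of $A$, and the induced adversary is oblivious: it fixes the Poisson sample in advance, and its offline optimum is bounded by the same construction. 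This gives the oblivious part of the statement.

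\textbf{Main obstacle.} The hardest step is making the state-dependent continuous-time Poisson process rigorous. This means showing that the expected clustering cost is exactly dominated by $A$'s expected infinitesimal movement, and that unboundedly many requests in finite time cause no issue. I would first discretise each round into $N$ steps with independent Bernoulli$(w_F/N)$ arrivals, prove the inequalities at each step by conditioning, and let $N\to\infty$, or simply work with large finite $N$ and lose a $1+o(1)$ factor.
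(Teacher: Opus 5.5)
Your proposal is correct and follows the paper's proof essentially step for step. It uses the same state-dependent Poisson rates given by a nonnegative norming functional, and the same two bounds on your own cost: the triangle-inequality bound on movement and the compensator/Fubini bound on clustering cost, which together give $2\,\E[M^{\mathcal A}]$. It also uses the same go-to-the-set-and-return adversary, which costs at most $2\|d_t^*\|$ per round since $\|z\|_*=1$, and the same observation that weighted $\ell_1$ norms give state-independent rates, which handles the oblivious case. The paper outputs the time of minimum clustering cost, though your uniformly random time also works, and the additive constant needs no special care: it simply becomes $2\beta$.
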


Combined with the recently announced $O(\log^2 k)$-competitive algorithm for $k$-server against oblivious adversaries~\cite{OAI26}, this gives an $O(\log^2 k)$-competitive algorithm for CMC with the $k$-median objective against oblivious adversaries. It also implies an $O(d)$-competitive algorithm for batched convex body chasing on $\R^d$, an $O(2^w)$-competitive algorithm for batched size-$w$ set chasing against adaptive online adversaries, and if $\|\cdot\|$ is a weighted $\ell_1$-norm, an $O(w^2)$-competitive algorithm against oblivious adversaries.

The above reductions are inherently randomized because of the application of Poissonization. For the line metric, however, we also provide a \emph{deterministic} $O(k)$-competitive algorithm for CMC. The algorithm decomposes the contribution of requests, but instead of a Poissonization that issues requests probabilistically, we scale the movement speed due to each request according to its marginal contribution and sum the resulting velocities.

\begin{theorem}[Norm-of-Distances, Deterministic, Line]
\label{thm:aggrdc-intro}
On the line metric, there is a deterministic $O(k)$-competitive algorithm for online CMC in the norm-of-distances model.
\end{theorem}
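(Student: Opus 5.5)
We prove \cref{thm:aggrdc-intro} by a deterministic analogue of the Poissonization reduction, built on the Double Coverage (DC) algorithm of Chrobak et al., which is $k$-competitive for $k$-server on the line. The potential is the standard one: $\Phi = k\cdot \Match(C,C^*) + \sum_{i<j}|c_i-c_j|$, where $C$ is the algorithm's configuration and $C^*$ is the adversary's. We process each round $t$ in two phases. First the adversary moves from $C^*_{t-1}$ to $C^*_t$; this raises $\Phi$ by at most $k\cdot\Match(C^*_{t-1},C^*_t)$. Then the algorithm runs a continuous-time ``aggregated DC'' dynamics, started at $C_{t-1}$, and stops at the first time the clustering cost of the current configuration is at most a constant times the movement done so far in the round, or when the dynamics stops having any effect.

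The dynamics is defined through a norming functional. At the current configuration $C$, let $v$ be the vector with $v_r=d(r,C)$ for $r\in R_t$. Choose a dual vector $w\ge 0$ with $\|w\|_*\le 1$ and $\langle w,v\rangle=\|v\|$; such a $w$ can be taken nonnegative by monotonicity. Each request $r$ then issues a DC ``pull'' of intensity $w_r$. If $r$ lies outside the convex hull of $C$, the nearest center moves toward $r$ at speed $w_r$. If $r$ lies between two adjacent centers, both of them move toward $r$ at speed $w_r$. The centers' actual velocity is the sum of these pulls. Since this is a deterministic scaling of DC velocities, linearity of the potential's rate of change under superposed pulls lets us run the DC accounting infinitesimally. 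Against a single request $r$ at rate $w_r$, DC's analysis gives
\[
\text{movement rate} + \dot\Phi \le -\,c\cdot w_r\,\bigl(d(r,C) - k'\,d(r,C^*)\bigr)
\]
for suitable constants $c,k'=O(k)$. The right-hand side comes from the fact that moving toward $r$ decreases the matching term whenever the matched adversary center is nearer to $r$.

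Summing over $r$ yields
\[
\text{movement rate} + \dot\Phi \le -c\bigl(\|v\| - O(k)\,\langle w, v^*\rangle\bigr),
\]
where $v^*_r=d(r,C^*)$. Since $\|w\|_*\le 1$, we have $\langle w,v^*\rangle\le\|v^*\|=\Clust_t(R_t,C^*_t)$. Integrating over the duration of the round bounds the algorithm's movement plus its final clustering cost by $O(k)$ times the adversary's cost plus the drop in $\Phi$. The stopping rule is what ties the final clustering cost to the integral of $\|v\|$. We stop once $\|v\|$ at the current time is at most the movement done so far. Until then, $\|v\|$ is at least the accumulated movement, so the integral of $\|v\|$ dominates the movement and we can charge it.

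The main obstacle is making the per-request DC inequality hold for fractional, simultaneous pulls. The difficulty arises when two requests pull the same pair of centers in opposite directions, or when a request outside the hull competes with one inside it. In these configurations the combined motion is not a single DC step. We would first try to show that the rate of change of $\Phi$ is linear in the velocity vector as long as the ordering of centers and the optimal matching stay fixed, so that the per-request bounds add. Ties and events where two centers merge would be handled by perturbation, breaking ties consistently.

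A second delicate point is that $w$ changes discontinuously over time. Here we need the infinitesimal bound only pointwise, plus measurability of $w$ in $t$, which is obtained by selecting $w$ from a piecewise-constant subdifferential choice. Finally, we must check that the stopping time exists. If $\|v\|>0$, some request exerts a nonzero pull, so $\|v\|$ strictly decreases. An alternative is to run the dynamics for unit virtual time and bound the tail, which we fall back on if the stopping rule causes trouble.
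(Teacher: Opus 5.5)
Your overall architecture (DC pulls aggregated by a nonnegative norming functional, analyzed with the DC potential) matches the paper's, but there is a genuine gap, and it sits in your per-request DC inequality, which is false for the dynamics you define. You let request $r$ pull its adjacent center(s) at speed $w_r$, independent of $d(r,C)$. Then the movement rate and the $\dot\Phi$ attributable to $r$ are both $O(k\,w_r)$, whatever the distance scale. They therefore cannot be bounded by $-c\,w_r\bigl(d(r,C)-k'\,d(r,C^*)\bigr)$, which grows linearly with distances.

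Concretely, take $k=1$, $C=\{0\}$, $C^*=\{D\}$, $R_t=\{D\}$, $w_r=1$. The center moves right at unit speed and $\dot\Phi=-1$, so your left-hand side is $0$ while the right-hand side is $-cD$. Replacing $\Phi$ by $\lambda\Phi$ turns the requirement into $1-\lambda\le -cD$, which fails for large $D$ whatever the constants. Even at unit scale, with $\Phi=k\Match+\Sigma$ the good DC cases give exactly movement $+\dot\Phi\le 0$, so there is no slack at all to pay for clustering.

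A second, independent problem is the round length. Integrating $O(k)\langle w,v^*\rangle$ over a round of duration $T$ gives $O(k)\,T\,\Clust_t(R_t,C_t^*)$. Your stopping time is not $O(1)$: with unit-speed pulls it scales with the distances ($T=D/(1+c)$ in the example). Your stopping rule also ties the final clustering cost to the movement, not to $\int\|v\|$, so it never connects to the $-c\int\|v\|$ term you rely on. Your unit-time fallback fixes $T$ but not the first problem.

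The missing idea is to make the pull of $r$ have speed $w_r\,d(r,C)$. That is, let $q_r$ be the DC velocity that moves each adjacent center toward $r$ at speed $d(r,C)$, aggregate as $V=\sum_r w_r q_r$, and double the potential to $2(k\Match+\Sigma)$. Now every term has degree one in distances:
\begin{itemize}
\item In the good cases the potential drops by at least twice the movement, and the movement is at least $d(r,C)$.
\item In the bad cases $d(r,C^*)\ge d(r,C)$.
\end{itemize}
Writing $\phi'_r$ for the rate of change of the potential under $q_r$ alone, this gives $\phi'_r+\|q_r\|_1+d(r,C)\le 4k\,d(r,C^*)$.

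Summing with weights $w_r$ needs only that the directional derivative of the convex potential is sublinear and that $\|V\|_1\le\sum_r w_r\|q_r\|_1$. This also disposes of your ``opposite pulls'' obstacle without any linearity or perturbation argument. The result is $\Phi'+\|V\|_1+\|v\|\le 4k\langle w,v^*\rangle\le 4k\|v^*\|$.

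Finally, run each round for exactly one unit of time and output the configuration of minimum clustering cost in $(t-1,t]$. Its cost is at most $\int_{t-1}^{t}\|v(s)\|\,\mathrm{d}s$, and the offline term integrates to exactly $\Clust_t(R_t,C_t^*)$. The output configurations appear in chronological order along the trajectory, so their movement is at most the path length. No stopping rule is needed.
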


Finally, we show that an adaptation of the algorithm of Theorem~\ref{thm:aggrdc-intro} also yields an analogous guarantee for the norm-of-radii objective.

\begin{theorem}[Norm-of-Radii, Deterministic, Line]
\label{thm:radii-intro}
    On the line metric, there is a deterministic $O(k)$-competitive algorithm for online CMC in the norm-of-radii model when the norm at every time is symmetric.
\end{theorem}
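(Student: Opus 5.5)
We adapt the deterministic line algorithm behind \cref{thm:aggrdc-intro}, a Double-Coverage-style velocity-aggregation scheme, to the norm-of-radii objective. First we fix the association: by the result of Appendix~\ref{appendix: association}, on the line, assigning each request to its nearest center is optimal up to a constant factor. We may therefore work with the cost $\|(\rad_1,\dots,\rad_k)\|$, where $\rad_i$ is the largest distance from center $i$ to a request in its nearest-center cell, and lose only $O(1)$. Each round is processed in continuous time: the centers move while the current clustering cost is still large compared with the distance already moved in the round. We then output the final configuration and charge clustering cost to movement.

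\textbf{Linearizing the radii.} For the current configuration $C$, let $\rho=(\rad_i)_i$ and take a norming functional $w\ge 0$ with $\|w\|_*=1$ and $\langle w,\rho\rangle=\|\rho\|$. Such a $w$ exists by monotonicity, and symmetry lets us choose it compatibly with permutations of the centers. For each center $i$ with $w_i>0$, the cost $\rad_i$ is caused by one extreme request $r_i$, the farthest point in its cell, possibly one on each side. We treat $r_i$ as a $k$-server request of weight $w_i$ and apply Double Coverage on the line at speed $w_i$ toward $r_i$. Concretely, the one or two centers adjacent to $r_i$ move toward it with velocity proportional to $w_i$. Because $r_i$ lies in the cell of center $i$, the adjacent centers are center $i$ and possibly a neighbor. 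The velocities of all $i$ are summed, as in the proof of \cref{thm:aggrdc-intro}.

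\textbf{Potential analysis.} We use the standard Double Coverage potential on the line, $\Phi = k\cdot \Match(C,C^*) + \sum_{i<j}|c_i-c_j|$, where $C^*$ is the adversary's configuration. We then show, in continuous time,
\[
\frac{d}{dt}\bigl(\text{movement}\bigr) + \frac{d\Phi}{dt} \le O(k)\,\|\rho^*\| - \Omega(\|\rho\|),
\]
where $\rho^*$ is the adversary's radius vector in the current round. Fix a point where the adversary serves $r_i$ with radius $\rho^*_{j(i)}\ge d(r_i,C^*)$. If $d(r_i,C^*)$ is small compared with $\rad_i$, Double Coverage toward $r_i$ shrinks the matching term at rate $\gtrsim w_i$, as in the $k$-server analysis. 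Summing over $i$ with weights $w_i$ gives a decrease of order $\sum_i w_i\cdot \mathbf 1[\rad_i\gg d(r_i,C^*)]$. Otherwise, $\rad_i\lesssim d(r_i,C^*)$, and we need $\sum_i w_i\, d(r_i,C^*)\le O(\|\rho^*\|)$.

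\textbf{Main obstacle.} The hard step is exactly this last bound. Several distinct $r_i$ may be served by the same adversary center, so the vector $(d(r_i,C^*))_i$ is not a permutation of $\rho^*$. We would handle this with the line structure. The extreme requests $r_i$ are ordered along the line, at most two per cell, and each adversary center's cluster is an interval. Hence one adversary center of radius $\rho^*_j$ can cover the extreme requests of many of our cells only if those cells lie inside an interval of length $2\rho^*_j$. In that case we are wasting centers, and we argue instead that a cheaper move is available. Otherwise, a charging argument assigns each $r_i$ to a distinct adversary center, up to a constant multiplicity, using an interleaving (matching) argument on the line. This yields a vector dominated, after a permutation, by $O(1)\cdot\rho^*$. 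Symmetry and monotonicity of the norm then give $\sum_i w_i d(r_i,C^*)\le \|w\|_*\,O(\|\rho^*\|)$. Finally, the clustering cost at the final configuration of each round is bounded by integrating the inequality, with the stopping rule chosen as in \cref{thm:aggrdc-intro}. This yields $O(k)$-competitiveness.
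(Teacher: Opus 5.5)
Your setup is the right one: nearest-center association, a nonnegative norming functional for the radius vector, Double Coverage toward each cell's farthest request aggregated across cells, and the line potential $k\Match+\Sigma$ (suitably scaled). You also locate the difficulty correctly. But your resolution of the double-charging step does not work. The bound you say you need, $\sum_{i\ \mathrm{bad}} w_i\,d(r_i,C^*)\le O(\|\rho^*\|)$, is false in general. Take the $\ell_1$-norm (so $w=\mathbf 1$), one offline center at $0$ with the others far away, and $m$ online centers near $1$, spaced $3\delta$ apart, each owning a single request at distance $\delta$. Every cell is in your bad case, since $\rad_i=\delta\ll d(r_i,C^*)\approx 1$. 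Yet $\sum_i w_i\,d(r_i,C^*)\approx m$ while $\|\rho^*\|\approx 1$, and $m$ can be $k$. The vector $(d(r_i,C^*))_i$ has $m$ entries near $1$ and $\rho^*$ has one, so no ``distinct offline center up to constant multiplicity'' assignment exists. Charging through $d(r_i,C^*)$ alone loses a factor of $k$ and gives only $O(k^2)$. Saying that ``a cheaper move is available'' does not rescue this: the algorithm is fixed, and the potential argument must pay for the moves it actually makes.

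What is missing is to bound each cell's charge by its \emph{own} radius as well.
\begin{itemize}
\item This requires the velocity induced by cell $c$ to be proportional to $w_c\rad_c$, not to $w_c$ alone. Then both the movement and the potential change it causes scale with $\rad_c$. The single-request Double Coverage case analysis then gives the amortized charge $s_c\le 4k\min\{\rad_c,\rad^*_{\chi(c)}\}$, where $\chi(c)$ is the offline center serving $r_c$.
\item With this bound, the configuration above is harmless. Fix an offline center $j$ with request hull $I$. The cells in $\chi^{-1}(j)$ whose centers lie in $I$ have disjoint center-to-$r_c$ segments inside $I$, so their radii sum to at most $|I|\le 2\rad^*_j$. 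At most two such cells have centers outside $I$, one on each side. Hence $\sum_{c\in\chi^{-1}(j)}s_c\le 16k\,\rad^*_j$.
\item No near-injective matching is needed. For a symmetric norm and a nonnegative vector, merging several coordinates into one by summing can only increase the norm. To see this, write the original vector as the expectation of random permutations of the merged one and use convexity. The many-to-one charges therefore give $\langle w,s\rangle\le\|s\|\le 16k\|\rad^*\|$.
\end{itemize}

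Your dynamics need fixing for the same scaling reason. With velocities proportional to $w_i$, your differential inequality compares a rate of order $\sum_i w_i$ with $\|\rho\|=\sum_i w_i\rad_i$, so it cannot hold at all distance scales. There is also no stopping rule in the deterministic norm-of-distances line algorithm to copy. That algorithm runs each round for unit time at distance-proportional speed, proves the amortized inequality pointwise, and outputs the best configuration of the interval. The same template here gives $\Phi'+\|V\|_1+\|\rad\|\le 16k\|\rad^*\|$ pointwise.
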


While the deterministic algorithm for norm-of-radii is similar to the algorithm for norm-of-distances, the analysis requires some new ideas. The challenge is that, unlike in the norm-of-distances setting, an offline center may simultaneously account for the farthest requests of several online clusters while paying only a single \emph{radius}. As a result, a naive charging argument from online centers to offline centers can overcount the offline cost and fail to yield a valid lower bound. We overcome this by exploiting the geometry of intervals on the line, together with the convexity and symmetry of the norm, to show that such overlaps can still be controlled within a constant factor.

\paragraph{Outline.} Following a discussion of related work in this section, \Cref{sec: prelim} defines the notation and provides some preliminaries. \Cref{sec: norm} presents the randomized algorithm for the norm-of-distances objective on general metrics and the generalization to chasing problems. \Cref{sec: line det} presents deterministic algorithms on the line. \Cref{sec: conclusion} contains some concluding remarks and open questions. Omitted proof details and further discussions can be found in the appendices.

\subsection{Related Work}
\label{sec: related work}
\paragraph{Online Clustering.} Beyond facility reallocation~\cite{de_keijzer_facility_2022,FOTAKIS202113}, there is also a line of work on online learning for clustering, where the requests are revealed \emph{after} the centers are chosen. In this line, the movement is usually free, and regret guarantees are against an optimal \emph{static} set of centers~\cite{Fotakis2021EfficientOL, pmlr-v130-cohen-addad21a,hebbar2025learning}. With movement costs, \cite{Christou23Efficient} studies the $k$-median objective, but still compares against an offline benchmark with fixed centers. In our setting, the requests are revealed before the configuration is selected, and we focus on the competitive ratio, which compares against an optimal offline algorithm that could move (and pay movement costs) as we do.

Another closely related line of work focuses on \emph{consistent $k$-clustering}~\cite{fichten21consistent,CompetitiveConsistent,OnlineClusteringConsistency,pmlr-v130-guo21a,LackiHGJR24,ForsterS25}. There, a set of $k$ centers must be maintained over a stream of insertions and deletions of requests such that at any time $t$, the centers form a (constant) approximation of the optimal clustering of the current requests. A main difference is that the movement is free, up to a restriction on the number of changes of the centers (usually $\widetilde{O} (k)$ in total when requests only arrive and $O(1)$ per request when requests could arrive or leave).

\paragraph{Facility Location.} When there is no fixed number of centers but one can open centers at some cost (and cannot move), the problem becomes the classical facility location problem, which has been studied with both adversarial order~\cite{Fotakis07,Fotakis08,DBLP:conf/focs/Meyerson01, NAGARAJAN2013361} and random-order inputs~\cite{Kaplan2023AlmostTB, DBLP:conf/focs/Meyerson01}. The variant where the connection cost is the sum of radii of facilities is considered in~\cite{Fotakis2011OnlineSC}, similar to our norm-of-radii objective.

A related variant is {facility location in evolving metrics}~\cite{AnNS17,EisenstatMS14}, in which the metric space changes over time. Instead of paying to move facilities, the algorithm may reassign requests to facilities, incurring reassignment costs. The online version of the problem is studied in~\cite{FotakisKZ21}.

\paragraph{$k$-Server and Variants.} The $k$-server problem is one of the most prominent problems in online algorithms. Deterministically, the competitive ratio for $k$-server is $k$ \cite{MMS88, CoesterKZ26}. With randomization, it is possible to achieve a competitive ratio of $O(\log^2 k)$ on HSTs~\cite{Bubeck18}, which implies an $O(\log^2k \log n)$-competitive algorithm on any $n$-point metric. Very recently, OpenAI announced an $O(\log^2k)$-competitive algorithm against oblivious adversaries~\cite{OAI26}, matching the $\Omega(\log^2 k)$ lower bound~\cite{BubeckCR23}. The reader is referred to \cite{KOUTSOUPIAS09kserver} for a survey on the $k$-server problem, and~\cite{BubeckCR23,CoesterKZ26} for more recent results.

Many variants of the $k$-server problem have been studied, including the $(h,k)$-server problem~\cite{BansalEJK17, BienkowskiBCJ20}, the infinite server problem~\cite{coester2017infinite}, the weighted $k$-server problem~\cite{FIAT199485, Bansal2017WeightedKB, Bijoy2025}, the generalized $k$-server problem~\cite{KT04, DBLP:conf/soda/Bansal0KN18, Sitter06generalized}, and the $k$-taxi problem~\cite{coester2019online, coester26taxi, anupam24taxi}.

\paragraph{Metrical Task and Service Systems.} The $k$-server problem (and all of the variants above except for $k$-taxi) belongs to the class of metrical task systems (MTS)~\cite{BorodinLS92, BubeckCR23, Coester19pure, bubeck21metrical}. Online CMC is also a metrical task system. Metrical Service Systems (MSS)~\cite{Chrobak1991TheSP} are a special case of MTS with costs in $\{0,\infty\}$, and can be viewed as a single server chasing a set of feasible locations. The works~\cite{bubeck22shortest, BURLEY1996479, fiat1998competitive, CoesterT26} study MSS with size-$w$ sets and show that the deterministic competitive ratio is $\Theta(2^w)$, while the randomized competitive ratio is $\Theta(w^2)$.

A closely related problem is metrical service systems with multiple servers (MSSMS), in which each request is a set of $w$ points and the algorithm must move so that at least one server reaches one of them. This problem is known to be difficult, and competitive algorithms are only known for the uniform metric~\cite{Chiplunkar15metrical, Feuerstein98uniform}. Although MSSMS and CMC both involve multiple servers responding to set-valued requests, the two problems capture fundamentally different objectives. In MSSMS, it suffices to serve \emph{one} point in the request set, whereas in CMC the cost is determined by the distances from the centers to \emph{all} points in the request set.

\paragraph{Online Convex Function Chasing.} Another related topic is online convex function chasing. The paper~\cite{bubeck21online} studies chasing convex functions with $k$ servers. They give strong general lower bounds and provide upper bounds for a class of functions that they call \emph{well-sharpened}.
When $k=1$, chasing general convex functions and convex bodies in $\R^d$ can be done deterministically with a competitive ratio $O(d)$~\cite{sellke20chase, ArgueGGT20}.

\section{Preliminaries}
\label{sec: prelim}
We write $[n] = \{1, \ldots, n\}$. Let $M = (X,d)$ be a metric space. For $u \in X$ and $S \subseteq X$, we write $d(u, S) = \min_{v \in S} d(u, v)$. A \emph{configuration} is a multiset of points in $X$ of size $k$.

We use the notation $(x_a)_{a \in A}$ to denote the vector with elements in $A$ as coordinates. For example, we write $(d(r, C))_{r \in R_t}$ where $C$ is a configuration to denote the vector of the distances from each request in $R_t$ to its nearest center in $C$. For a vector-valued function $f\colon\R\to\R^A$, we use $f_a(\tau)$ to denote its coordinate corresponding to $a \in A$.

We use the following definitions related to norms:
\begin{enumerate}[itemsep=0em, topsep=0.3em]
    \item A norm $\|\cdot\|$ is \emph{monotone} if for all $0 \le u \le v$, $\|u\| \le \|v\|$.
    \item A norm $\|\cdot\|\colon \R^m \to \R$ is \emph{symmetric} if for any $x \in \R^m$ and permutation $\pi\colon[m]\to [m]$,
\[
\|(x_1, \ldots, x_m)\| = \|(x_{\pi(1)}, \ldots, x_{\pi(m)})\|.
\]
    \item The dual norm of a norm $\|\cdot\|\colon \R^m \to \R$ is defined as \[
        \|z\|_* = \max\{ |\langle x, z \rangle| \colon \|x\| = 1\}.
    \]
    It is well-known that the dual of the dual norm is the original norm.
Any $z$ such that $\|z\|_* = 1$ and $|\langle z, x \rangle| = \|x\|$ is called a \emph{norming functional} for $x$.
\end{enumerate}
We will use the following property on the non-negativity of the norming functional, whose proof appears in Appendix~\ref{appendix: prelim}.
\begin{lemma}
\label{lemma: positive dual}
    If a norm $\|\cdot\|\colon\R^m \to \R$ is monotone, then for any $x \in \R^m_{\ge 0}$, there exists $z \in \R_{\ge 0}^m$ such that $\|z\|_* = 1$ and $\langle z,x\rangle = \|x\|$.
\end{lemma}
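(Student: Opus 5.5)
Let $x\in\R^m_{\ge 0}$; if $x=0$, any $z\ge 0$ with $\|z\|_*=1$ works (e.g.\ a nonnegative norming functional of any nonzero nonnegative vector, obtained below), so assume $x\neq 0$. By the standard duality (Hahn--Banach in finite dimensions, or the fact that the dual of the dual norm is the original norm), there is a norming functional $w\in\R^m$ with $\|w\|_*=1$ and $\langle w,x\rangle=\|x\|$; replacing $w$ by $-w$ if needed handles the absolute value. The plan is to replace $w$ by its positive part $z=w^+$, i.e.\ $z_i=\max(w_i,0)$, and show that $z$ is still a norming functional.

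The first step is to show $\|z\|_*\le 1$. For any $y$ with $\|y\|\le1$, let $y'$ be defined by $y'_i=|y_i|$ if $w_i>0$ and $y'_i=0$ otherwise. Then $|\langle z,y\rangle|\le \sum_{i:w_i>0} w_i|y_i|=\langle w,y'\rangle\le \|y'\|$. By monotonicity $\|y'\|\le \|\,|y|\,\|$, where $|y|$ is the coordinatewise absolute value. The main obstacle is to show that $\|\,|y|\,\|\le\|y\|$, since monotonicity is stated only for nonnegative vectors.

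I would handle this with the standard fact that a monotone norm on $\R^m$ is absolute. Flipping the sign of one coordinate preserves the norm: write $u=(y_1,\dots,y_m)$ and $\tilde u=u$ with $y_i$ replaced by $-y_i$. Applying the argument below to $|u|$ and its sign flips, it suffices to show $\|v\|=\|\,|v|\,\|$ for each $v$. Given the subtlety, I would take the route of \emph{interpreting} monotonicity as $|u|\le|v|\Rightarrow\|u\|\le\|v\|$, which is the standard equivalent definition (Bauer--Stoer--Witzgall: monotone iff absolute). If only the nonnegative-orthant version is available, I would instead argue directly. For the lower bound $\langle z,x\rangle\ge\langle w,x\rangle=\|x\|$, which holds since $x\ge0$, no absoluteness is needed. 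For $\|z\|_*\le1$ it suffices to test nonnegative $y$, because $\sup_{\|y\|\le1}\langle z,y\rangle$ for $z\ge0$ is determined by the restriction of $\|\cdot\|$ to the orthant. I expect this reduction to be the step requiring the most care, and I would settle it by invoking the absolute-norm characterization.

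Given $\|z\|_*\le1$, the final step is to compute $\|x\|=\langle w,x\rangle\le\langle z,x\rangle\le\|z\|_*\|x\|\le\|x\|$, using $x\ge0$ and $w\le z$. Hence equality holds throughout, so $\langle z,x\rangle=\|x\|$ and $\|z\|_*=1$ since $x\neq0$. Thus $z\ge0$ is the desired norming functional.
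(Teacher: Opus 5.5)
The step that fails is $\|w^+\|_*\le 1$. In this paper ``monotone'' means only $0\le u\le v\Rightarrow\|u\|\le\|v\|$, and such a norm need not be absolute. The Bauer--Stoer--Witzgall equivalence concerns the stronger hypothesis $|u|\le|v|\Rightarrow\|u\|\le\|v\|$, so reinterpreting monotonicity that way proves a strictly weaker lemma; for absolute norms your truncation argument is correct. Your fallback claim is false: for $z\ge 0$, the value $\sup_{\|y\|\le 1}\langle z,y\rangle$ is not determined by the norm on the orthant, and truncating a norming functional can destroy the property. On $\R^3$ take
\[
\|y\|=\max\{|y_1-y_2|,\ |y_1+y_3|,\ |y_2|\}.
\]
For $y\ge 0$ this equals $\max\{y_1+y_3,\,y_2\}$, so the norm is monotone in the paper's sense. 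For $x=e_1$ we have $\|x\|=1$, and $w=(1,-1,0)$ is a norming functional since $\langle w,y\rangle=y_1-y_2\le\|y\|$. But $w^+=e_1$, and $y=(2,1,-1)$ has $\|y\|=1$ and $\langle w^+,y\rangle=2$, so $\|w^+\|_*\ge 2$. Here $\|\,|y|\,\|=3>\|y\|$, and testing only nonnegative $y$ would give $1$ instead of $2$. In fact the dual unit ball is $\mathrm{conv}\{\pm(1,-1,0),\pm(1,0,1),\pm(0,1,0)\}$. So the norming functionals with $\langle z,e_1\rangle=1$ are exactly $(1,t-1,t)$ with $t\in[0,1]$. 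The lemma holds via $t=1$, but truncation fails for every $t<1$.

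The missing idea is to produce a nonnegative functional directly rather than by post-processing an arbitrary one. That is the paper's route: it selects a nonnegative subgradient of $\|\cdot\|$ at $x$, attributing its existence to monotonicity. It then checks, via $\|0\|$ and $\|2x\|$, that every subgradient is a norming functional. The existence step can be justified using only the orthant version of monotonicity. For $x\ge 0$, $x\ne 0$, consider the convex sets $U=\{y:\|y\|<\|x\|\}$ (open, containing $0$) and $K=x+\R^m_{\ge 0}$. They are disjoint, because $y\ge x\ge 0$ forces $\|y\|\ge\|x\|$. Separation gives $z\ne 0$ with $\langle z,y\rangle<\langle z,x+b\rangle$ for all $y\in U$ and $b\ge 0$. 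Taking $y=0$ and $b=se_i$ with $s\to\infty$ forces $z_i\ge 0$. Taking $b=0$ and the supremum over $y\in U$ gives $\|x\|\,\|z\|_*\le\langle z,x\rangle\le\|z\|_*\,\|x\|$, so $z/\|z\|_*$ is the required functional.
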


We will consider two classes of clustering costs. In the \emph{norm-of-distances} model, the cost is a norm of the vector of distances from each request to its nearest center: \[ \Clust_t(R_t, C_t) = \bigl\|\bigl( d(r, C_t) \bigr)_{r\in R_t}\bigr\|, \] where $\|\cdot\|$ is a norm that is monotone.

In the \emph{norm-of-radii} model, the cost is a norm of the radii of the clusters. Let $\rad \in \R^k$ be defined such that $\rad_i$ is the radius of the $i$-th cluster (i.e., the distance from the center to the farthest point in the cluster); then
\[ \Clust_t(R_t, C_t) = \bigl\|\rad\bigr\|, \] where $\|\cdot\|$ is a norm that is symmetric and monotone.

As mentioned in the Introduction, we measure performance by the competitive ratio against an adversary who generates the requests and also serves them. For a request sequence $\sigma$, let $\mov_t(\sigma)$, $\clust_t(\sigma)$, and $\cost_t(\sigma)$ denote the movement, clustering, and total cost incurred by our online algorithm in round $t$; without the subscript, these denote the total over all rounds. We use a star to denote the corresponding adversary's costs, with arbitrary but fixed tie-breaking; for example, $\mov^*(\sigma)$, $\clust^*(\sigma)$, and $\cost^*(\sigma)$. We may omit $\sigma$ when it is clear from the context. A randomized algorithm is $\alpha$-competitive against a given adversary if, for every request sequence $\sigma$ generated by that adversary, $\E[\cost(\sigma)] \le \alpha \cdot \E[\cost^*(\sigma)] + \beta$, where $\beta$ is a constant independent of $\sigma$. Note that the cost of the adversary can be a random variable if the adversary adapts to a randomized online algorithm.

\section{The Norm-of-Distances Objective}
\label{sec: norm}
\newcommand{\shadowserv}{\textsc{ShadowServe$(\mc A)$}\xspace}
\newcommand{\farthestserv}{\textsc{FarthestServe$(\mc A)$}\xspace}

In this section, we prove \cref{thm:any-norm-intro} by showing how to convert a $k$-server algorithm into an online CMC algorithm with the same (asymptotic) competitive ratio.

\subsection{Warm-Up: The $k$-Center Objective}
Before proceeding to the main algorithm, we begin with a simple deterministic reduction for the case where the norm in each round is $\ell_\infty$, corresponding to the $k$-center objective. In this case, the clustering cost in round $t$ is $\max_{r\in R_t} d(r,C_t)$, the maximum distance from a request in $R_t$ to its nearest center in $C_t$. Let $\mc A$ be a competitive $k$-server algorithm. In each round, our algorithm \farthestserv first outputs the current configuration and thus incurs its clustering cost, then selects a farthest request, issues it as a $k$-server request to the underlying algorithm $\mc A$, and follows the configuration of $\mc A$. The pseudocode is given in \cref{alg:shadowserv-center}.
\begin{algorithm}[h]
\caption{\farthestserv}
\label{alg:shadowserv-center}
\begin{algorithmic}[1]
\Require A $k$-server algorithm $\mc A$ on $(X,d)$, initial configuration $C_0$
\State Let $S_t$ denote the configuration of $\mc A$ at round $t$, initially $S_0 \gets C_0$.
\For{rounds $t=1,2,\dots$}
  \State Observe request multiset $R_t \subseteq X$.
  \State Output $C_t \gets S_{t-1}$. \Comment{Pay the clustering cost before movement}
  \State $r_t \gets \arg\max_{r \in R_t} d(r, C_t)$ \Comment{The farthest request, breaking ties arbitrarily}
  \State Issue a $k$-server request $r_t$ to $\mc A$, resulting in $S_t$.
\EndFor
\end{algorithmic}
\end{algorithm}

Intuitively, since the $\ell_\infty$-norm clustering cost is determined solely by a single request, we can view the round as effectively containing only the farthest request and reducing to the case $|R_t|=1$. In this regime, treating a singleton request as a $k$-server request loses only a constant factor (cf. \cite{bubeck21online}). Moreover, issuing only the farthest request from the \emph{online} configuration gives a valid lower bound on the offline clustering cost, since the request farthest from the offline configuration is, by definition, at least as far from the offline configuration as the issued request.

\begin{theorem}
\label{thm: furtherserv}
    If $\mc A$ is a deterministic $\alpha$-competitive algorithm for the $k$-server problem, then $\farthestserv$ is a deterministic $O(\alpha)$-competitive algorithm for online CMC in the norm-of-distances model when the norm is $\ell_\infty$.
\end{theorem}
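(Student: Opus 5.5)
We compare the online CMC cost with the cost of $\mc A$ on the induced $k$-server sequence $\rho=(r_1,r_2,\dots)$, and compare the optimal $k$-server cost on $\rho$ with the adversary's CMC cost. Write $\delta_t = d(r_t, C_t) = \max_{r\in R_t} d(r,S_{t-1})$ for the clustering cost paid in round $t$.

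\emph{Online cost.} Because $C_t=S_{t-1}$, the movement cost in round $t$ is $\Match(S_{t-2},S_{t-1})$, which is exactly $\mc A$'s movement in the previous round. Hence the total online movement is at most $\cost_{\mc A}(\rho)$. For the clustering cost, after request $r_t$ is served some server of $\mc A$ sits at $r_t$. Moving from $S_{t-1}$ to $S_t$ therefore costs at least $d(r_t,S_{t-1})=\delta_t$. Summing gives $\clust \le \cost_{\mc A}(\rho)$, and so
\[
\cost \le 2\,\cost_{\mc A}(\rho) \le 2\alpha\,\OPT_{k\text{-server}}(\rho)+O(1).
\]

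\emph{Offline bound.} The main step is to build a $k$-server solution for $\rho$ from the adversary's CMC configurations $C^*_t$, with cost $O(\cost^*)$. We use the standard trick for singleton requests. In round $t$, the $k$-server solution moves from $C^*_t$: it takes a nearest center of $C^*_t$ to $r_t$, moves it to $r_t$, and moves it back. It then continues to $C^*_{t+1}$ at cost $\Match(C^*_t,C^*_{t+1})$. The round-trip detour costs $2d(r_t,C^*_t)$.

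By the definition of the $\ell_\infty$ norm, $d(r_t,C^*_t)\le \max_{r\in R_t} d(r,C^*_t)=\clust^*_t$. This is exactly the point made in the intuition above: the request issued is chosen as farthest from the \emph{online} configuration, yet its distance to the \emph{offline} configuration is still at most the offline $k$-center cost. Therefore
\[
\OPT_{k\text{-server}}(\rho)\le \mov^* + 2\,\clust^* \le 2\cost^*.
\]
Combining the two bounds yields $\cost\le 4\alpha\cost^*+O(1)$.

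The main obstacle is the interaction with the adversary, since $\rho$ depends on $R_t$ and hence on the adversary. Because both the algorithm and $\mc A$ are deterministic, the whole run is determined once $\sigma$ is fixed. The $\alpha$-competitiveness of $\mc A$ then applies to the fixed sequence $\rho$, with its additive constant independent of $\rho$. We also need to check the round-indexing shift in the movement term; the initial configurations coincide ($S_0=C_0$), so the shift costs nothing extra.
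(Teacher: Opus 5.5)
Your proposal is correct and follows the paper's proof essentially step for step. You bound the online movement by $\mc A$'s movement one round earlier and the clustering cost $d(r_t,S_{t-1})$ by $\Match(S_{t-1},S_t)$, giving $\cost\le 2M^{\mc A}$; you then simulate the adversary's configurations with a round-trip detour of cost $2d(r_t,C_t^*)\le 2\clust_t^*$, giving $M^*\le 2\cost^*$ and the ratio $4\alpha$. The only minor imprecision is calling $\Match(S_{t-2},S_{t-1})$ \emph{exactly} $\mc A$'s movement in the previous round, when it is only at most that, which does not affect the argument.
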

\begin{proof}
    We write $M^\mc A$ for the total movement of the $k$-server algorithm $\mc A$ (on the generated request sequence) and $M^*$ for the movement of the offline $k$-server algorithm. Recall that $\cost$ refers to that of \farthestserv and $\cost^*$ refers to the cost of the offline CMC algorithm.

    We first bound the $\cost$ of \farthestserv by the movement of the $k$-server algorithm $M^\mc A$: for each round $t \ge 1$ (for completeness, let $S_{-1} = S_0$),
    \begin{align}
        \mov_t &= \Match(C_{t-1}, C_t) = \Match(S_{t-2}, S_{t-1}), \label{eq:farthest-1}\\
        \clust_t &= d(r_t, C_t) = d(r_t, S_{t-1}) \le \Match(S_{t}, S_{t-1}), \label{eq:farthest-2}
    \end{align}
    where equation~\eqref{eq:farthest-1} follows from the algorithm specification that $C_t = S_{t-1}$, and inequality~\eqref{eq:farthest-2} holds because serving $r_t$ from $S_{t-1}$ costs at least $d(r_t,S_{t-1})$. Thus,
    \[
    \cost = \sum_t (\mov_t + \clust_t) \le \sum_t \Match(S_{t-2}, S_{t-1}) + \sum_t \Match(S_{t-1}, S_{t}) \le 2 M^{\mc A}.
    \]

    On the offline side, let $C_t^*$ be the configuration of an optimal CMC solution in round $t$. Consider the following offline $k$-server strategy: at the beginning of round $t$, move to $C_t^*$; then, send the closest server to the request in round $t$ and return it to its original position. The movement between rounds is exactly the same as in the CMC solution, and in round $t$ the request $r_t$ incurs a movement equal to $2d(r_t,C_t^*)$. Since the optimal offline $k$-server cost is no larger than the cost of this particular strategy, we obtain
    \begin{align*}
         M^* &\le \sum_t (\mov^*_t + 2 d(r_t, C_t^*)) \\&\le \sum_t (\mov^*_t + 2\max_{r \in R_t} d(r, C_t^*)) = \sum_t (\mov^*_t + 2\clust_t^*) \le 2\cdot \cost^*.
    \end{align*}
    Since $\mc A$ is $\alpha$-competitive, meaning $M^\mc A \le \alpha \cdot M^* + \beta$, we conclude
    \[
    \cost \le 2\cdot M^{\mc A} \le 2 \alpha \cdot M^* + 2\beta \le 4\alpha \cdot \cost^* + 2\beta. \qedhere
    \]
\end{proof}
Taking $\mc A$ to be the $k$-competitive Work Function Algorithm (WFA)~\cite{KP95,CoesterKZ26}, we achieve a deterministic $O(k)$-competitive algorithm.

\subsection{The Main Algorithm}
Our algorithm for more general norms follows the same basic framework as in the previous section: we seek to construct a $k$-server request sequence such that (i) the cost of the online CMC algorithm is \emph{upper bounded} by that of the online $k$-server algorithm, and (ii) the cost of the offline CMC algorithm is \emph{lower bounded} by that of the offline $k$-server algorithm.

However, \farthestserv relies heavily on the fact that the $\ell_\infty$-norm is determined by a single request. For a general norm, issuing just one request is no longer sufficient. Instead, we aim to generate a sequence of requests in a way that captures the contributions of the different distances. To achieve this, we use a norming functional to linearize the norm at the current distance vector. By Lemma \ref{lemma: positive dual}, for any vector $x\ge 0$ and norm $\|\cdot\|$ that is monotone, there is a nonnegative norming functional $z\ge 0$ satisfying $\|z\|_* = 1$ and $\langle z,x\rangle = \|x\|$. The norming functional $z$ provides a principled way to decompose the norm into contributions from the different requests. Since $z \ge 0$, we can interpret its coordinates as sampling rates in Poisson processes and use them to generate a random sequence of $k$-server requests. The pseudocode of our algorithm is given in \Cref{alg:shadowserv}.

\begin{algorithm}[h]
\caption{\shadowserv}
\label{alg:shadowserv}
\begin{algorithmic}[1]
\Require A $k$-server algorithm $\mc A$ on $(X,d)$, initial configuration $C_0$
\State Let $S(\tau)$ denote the configuration of $\mc A$ at time $\tau \in \R$, initially $S(0) \gets C_0$
\For{rounds $t=1,2,\dots$}
  \State Observe request multiset $R_t \subseteq X$ and the norm $\|\cdot\|$ on $\mathbb{R}^{R_t}$
  \ForAll{infinitesimal time intervals $[\tau,\tau + \d\tau) \subseteq (t-1,t]$}
    \State Let $d(\tau) \gets \big(d(r,\, S(\tau))\big)_{r\in R_t}$ \Comment{Distances to nearest server}
    \State Choose a norming functional $z(\tau)\in \mathbb{R}^{R_t}_{\ge 0}$ s.t.
      $\|z(\tau)\|_* = 1$ and $\langle z(\tau), d(\tau)\rangle = \|d(\tau)\|$.
      \ForAll{requests $r \in R_t$ \textbf{independently}}
      \State Issue a $k$-server request $r$ to $\mc A$ with probability $z_r(\tau)\, \d\tau$

    \EndFor
  \EndFor

  \State Select $\tau_t \in \arg\min_{\tau\in(t-1,t]} \|d(\tau)\|$ \Comment{Optimal clustering time}
  \State Set $C_t \gets S(\tau_t)$
\EndFor
\end{algorithmic}
\end{algorithm}
\paragraph{The Algorithm.}  We run the $k$-server algorithm $\mc A$ in continuous time and generate a \emph{shadow} $k$-server request process from the CMC instance. Discrete round $t$ of the CMC instance corresponds to the time interval $(t-1,t]$ of the $k$-server instance. For $\tau \in \R_{\ge 0}$, let $S(\tau)$ be $\mc A$'s configuration and define the distance vector $d(\tau)=(d(r, S(\tau)))_{r\in R_t}$ and its norming functional $z(\tau)\ge 0$. At each time $\tau\in(t-1,t]$, we generate shadow requests using independent Poisson clocks: for each $r\in R_t$, an arrival at $r$ occurs in $[\tau,\tau+\d\tau)$ with probability $z_r(\tau)\,\mathrm d\tau$ (equivalently, at rate $z_r(\tau)$). Algorithm $\mc A$ serves this random shadow request stream, producing the trajectory $S(\tau)$. Our CMC algorithm then outputs $C_t=S(\tau_t)$, where $\tau_t\in(t-1,t]$ minimizes $\|d(\tau_t)\|$ over the interval. Note that the configuration of the $k$-server algorithm is not rewound to $S(u_t)$ but remains $S(t)$ before serving the next round.

For intuition: when $\|\cdot\|$ is the $\ell_\infty$-norm, only the farthest request is issued once in expectation, and we behave similarly to \farthestserv; when $\|\cdot\|$ is the $\ell_1$-norm, we have $z(\tau)=\mathbf 1$, so the algorithm serves an i.i.d. sequence with unit rates for each request. The resulting trajectory tends to explore the batch evenly and leads to a configuration that is ``good on average.''

We make a few remarks about the algorithm. The Poisson process is well-defined because the rates are piecewise constant, and it can be implemented by sampling waiting times from exponential distributions. Since the total rate $\sum_{r \in R_t} z_r(u)$ is finite throughout each interval $(t-1,t]$, the Poisson process is non-explosive on finite time intervals, and simultaneous arrivals occur with probability zero. Finally, it is plausible (and acceptable) that a request from the original request set appears multiple times or does not appear at all in the shadow process during the interval.

\paragraph{The Poissonization.}
The purpose of Poissonization is to couple the clustering cost to the expected movement cost of the induced $k$-server process for both online and offline algorithms, just as in the $\ell_\infty$ case.

On the online side, at time $\tau$, each request $r$ is generated as a shadow arrival with instantaneous rate $z_r(\tau)$, where $z(\tau)$ is a norming functional of $d(\tau)$. Whenever $r$ is issued, $\mc A$ must incur movement of at least $d_r(\tau)$ to serve it. Therefore, the expected instantaneous movement cost of $\mc A$ at time $\tau$ is at least $\sum_{r\in R_t} z_r(\tau)\,d_r(\tau)\;=\;\|d(\tau)\|$, which is exactly the instantaneous clustering cost.

On the offline side, consider the offline $k$-server algorithm that follows the offline CMC solution and serves each request with the closest center and moves back. Denote by $d^*$ the offline distance vector and $z^*$ its norming functional. The expected within-round cost of this algorithm is $2\sum_{r\in R_t} z_r(\tau)d_r^\ast =2\langle z(\tau), d^* \rangle\le 2 \langle z^*, d^* \rangle = 2\|d^*\|$,
since the shadow requests are generated using the \emph{online} norming functional $z(\tau)$ rather than one aligned with $d^*$.

We now formalize the reasoning above.

\begin{theorem}\label{thm:any-norm}
If $\mc A$ is a randomized $\alpha$-competitive algorithm for the $k$-server problem against adaptive online adversaries, then $\shadowserv$ is a randomized $O(\alpha)$-competitive algorithm for online CMC against adaptive online adversaries.
\end{theorem}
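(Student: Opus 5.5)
We follow the two-sided template of \cref{thm: furtherserv}. The first step upper-bounds the CMC algorithm's cost by the movement $M^{\mc A}$ of $\mc A$ on the shadow request stream. The second step lower-bounds the adversary's CMC cost by the cost of a particular $k$-server strategy on that same stream. Competitiveness of $\mc A$ against adaptive online adversaries is then invoked for the shadow sequence.

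To make this legitimate, we view the shadow process, together with the CMC adversary, as an adaptive online $k$-server adversary. It generates each shadow request based on the past configuration of $\mc A$, through $z(\tau)$, which depends on $S(\tau)$. It also serves each request immediately using the strategy described below, which only uses the CMC adversary's current configuration $C_t^*$. Here $C_t^*$ is chosen by the CMC adversary before or at round $t$, from information available online.

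\textbf{Online side.} Since $C_t = S(\tau_t)$, we have
\[
\clust_t=\|d(\tau_t)\|\le\int_{t-1}^t\|d(\tau)\|\,\d\tau .
\]
By the Poisson construction, the expected movement of $\mc A$ in $[\tau,\tau+\d\tau)$ is at least
\[
\sum_r z_r(\tau)\,d_r(\tau)\,\d\tau=\|d(\tau)\|\,\d\tau,
\]
because serving $r$ from $S(\tau)$ costs at least $d(r,S(\tau))$. To make this rigorous, we write the movement as an integral against the counting processes and use that $z_r(\tau)\,d_r(\tau)$ is predictable, i.e.\ determined by the history before $\tau$; the compensator identity for Poisson processes then gives $\E[\sum_t\clust_t]\le\E[M^{\mc A}]$. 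For movement we use the triangle inequality:
\[
\Match(C_{t-1},C_t)\le \Match(S(\tau_{t-1}),S(t-1))+\Match(S(t-1),S(\tau_t)),
\]
which is at most the movement of $\mc A$ during $(\tau_{t-1},\tau_t]$. Summing over $t$ bounds the total movement by $M^{\mc A}$. Hence $\E[\cost]\le 2\E[M^{\mc A}]$.

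\textbf{Offline side.} The $k$-server strategy moves to $C_t^*$ at time $t-1$, paying $\mov_t^*$. Each shadow request $r$ in $(t-1,t]$ is served by the nearest center of $C_t^*$, which then returns, at cost $2d(r,C_t^*)$. Its expected within-round cost is
\[
2\E\int_{t-1}^t\langle z(\tau),d^*\rangle\,\d\tau\le 2\|d^*_t\|,
\]
since $\langle z,d^*\rangle\le\|z\|_*\|d^*\|=\|d^*\|$ by the definition of the dual norm with $z\ge0$, $d^*\ge0$, and the interval has length $1$. Thus the expected cost of this strategy is at most $2\E[\cost^*]$, and so
\[
\E[M^{\mc A}]\le\alpha\cdot 2\E[\cost^*]+\beta,
\]
giving $\E[\cost]\le4\alpha\E[\cost^*]+2\beta$.

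\textbf{Main obstacle.} The hard step is the rigor of the continuous-time stochastic argument. Three issues arise. First, we must justify the expectation identities with state-dependent rates, since $z(\tau)$ depends on $S(\tau)$, which depends on past arrivals; we handle this via martingale or compensator arguments and the non-explosiveness noted above. Second, the shadow adversary must genuinely be an adaptive online adversary; the main subtlety is that the CMC adversary sees only $C_t$, a function of the process history. Third, $\mc A$'s competitiveness must apply to a random, possibly unbounded number of requests, so the additive constant $\beta$ must not scale with the number of requests. We take this as part of the definition of competitiveness, where $\beta$ is independent of the sequence.
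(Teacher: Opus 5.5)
Your proposal is correct and follows essentially the same route as the paper. You use the same three bounds (movement $\le M^{\mc A}$ pathwise by the triangle inequality, expected clustering $\le \E[M^{\mc A}]$ via the norming functional as Poisson rates, and the move-to-$C_t^*$-then-serve-and-return strategy giving $\E[M^*]\le 2\E[\cost^*]$), and you combine them to the same $4\alpha\,\E[\cost^*]+2\beta$; your compensator/predictability phrasing is simply a more careful version of the paper's Fubini-plus-instantaneous-rate step.
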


\begin{proof}
Fix an adaptive online adversary for online CMC, and consider an adaptive online adversary for $k$-server that generates the request sequence as in our reduction. Let $M^*$ denote the movement cost of the $k$-server adversary. Let $M_t^{\mc A}$ be the movement cost of $\mc A$ during round $t$ against the generated sequence, and let $M^\mc A$ be the overall movement cost of $\mc A$.

\begin{claim}
\label{claim:alg-mov-server-mov}
    The movement cost of \shadowserv is bounded by the movement cost of the $k$-server algorithm $\mc A$: in every realization, \[\mov \le M^{\mc A}.\]
\end{claim}
\begin{proof}
Fix a realization of the randomness. Let $\tau^{(1)}, \tau^{(2)},\ldots$ be the sequence of times at which $\mc A$ moves, and let $u_0=u^{(0)}=0$. By the triangle inequality,
\begin{align*}
    \mov =\sum_t \Match(C_{t-1}, C_{t}) &= \sum_t \Match(S(\tau_{t-1}), S(\tau_{t}))
    \\&\le \sum_{t} \sum_{\tau^{(i)} \in (\tau_{t-1}, \tau_{t}]} \Match(S(u^{(i-1)}), S(u^{(i)})) \\&\le \sum_t M_t^{\mc A} = M^{\mc A}.\qedhere
\end{align*}
\end{proof}

\begin{claim}
\label{claim:alg-serv-server-mov}
    The expected clustering cost of \shadowserv is bounded by the expected movement cost of the $k$-server algorithm $\mc A$: \[\E[\clust] \le \E[M^\mc A].\]
\end{claim}
\begin{proof}
For each round $t$,
\begin{align}
\E[\|d(\tau_t)\|]
&\le \E\left[\int_{t-1}^{t} \|d(\tau)\| \, \d\tau\right] \label{eq:shadow-1} \\
&= \E\left[\int_{t-1}^{t} \langle z(\tau), d(\tau)\rangle \, \d\tau \right]\label{eq:shadow-2} \\
&= \int_{t-1}^{t} \E[\langle z(\tau), d(u)\rangle] \, \d\tau \label{eq:shadow-3}\\
&\le \E[M_t^{\mc A}] \label{eq:shadow-4},
\end{align}
where inequality~\eqref{eq:shadow-1} holds because we choose $\tau_t$ with the smallest clustering cost, equality~\eqref{eq:shadow-2} follows from the definition of $z$, and equality~\eqref{eq:shadow-3} follows from Fubini's theorem, which allows swapping the integration order. For inequality~\eqref{eq:shadow-4}, at any time $\tau\in (t-1,t]$, each request $r\in R_t$ is issued independently at instantaneous rate $z_r(\tau)$, and if it is issued then $\mc A$ must move by at least $d(r,S(\tau))$ to serve it. Hence the expected infinitesimal movement cost of $\mc A$ at time $\tau$ is at least $\E[\sum_{r\in R_t} z_r(\tau)\, d(r,S(\tau))]
= \E[\langle z(\tau), d(u)\rangle]$, and the inequality follows from integration.

The claim then follows from summing over $t$.
\end{proof}

\begin{claim}
\label{claim:server-opt-facility-opt}
    The expected cost of the adaptive online adversary for the $k$-server algorithm is bounded by the expected cost of the adaptive online adversary's CMC algorithm:
    \[
    \E[M^*] \le 2\cdot \E[\cost^*].
    \]
\end{claim}
\begin{proof}
    Let $C_t^*$ denote the (random) configuration of an adaptive online adversary for online CMC at round $t$ and let $d^*_t = (d(r, C_t^*))_{r \in R_t}$ be the corresponding distance vector. Consider the following service strategy of the adaptive online $k$-server adversary: at round $t$, first move the configuration to $C_t^*$, then whenever a request during round $t$ appears, move the closest server to the request and move it back.

    The expected cost of this strategy, conditioned on $d_t^*$, during round $t$ is
    \begin{align*}
        2 \int_{t-1}^{t} \E[\langle z(\tau), d^*_t\rangle \mid d_t^*] \, \d\tau
    \le2 \int_{t-1}^{t} \left(\max_{\|z\|_*=1} |\langle z, d^*_t \rangle|\right)\d\tau
    =
    2 \int_{t-1}^{t} \|d^*_t\| \, \d\tau
    =
    2\cdot \clust^*_t.
    \end{align*}

    Further, this strategy moves between rounds exactly as the adaptive online adversary for online CMC does. Thus, the total cost of this adaptive online adversary for $k$-server is upper-bounded by
    \[\sum_t (2\cdot \E[\clust^*_t] + \E[\mov^*_t]) \le 2\cdot \E[\cost^*]. \qedhere\]
\end{proof}

Combining Claims \ref{claim:alg-mov-server-mov}, \ref{claim:alg-serv-server-mov}, and \ref{claim:server-opt-facility-opt} gives that our total expected cost satisfies
\[
\E[\cost] = \E[\mov] + \E[\clust] \le 2 \E[M^{\mc A}] \le 2\alpha \cdot \E [M^*] + 2\beta \le 4\alpha \cdot \E[\cost^*] + 2\beta. \qedhere
\]

\end{proof}

Taking $\mc A$ to be the $k$-competitive Work Function Algorithm~\cite{KP95,CoesterKZ26} yields \cref{thm:any-norm-intro}.

We remark that the reduction does not directly apply to algorithms that are competitive \emph{only} against \emph{oblivious} adversaries. Indeed, the shadow requests are generated at rates that depend on the current configuration of the $k$-server algorithm, so the resulting request stream is adaptive from the perspective of $\mc A$, and accordingly it will not suffice if $\mc A$ is competitive only against oblivious adversaries. Nevertheless, when the norming functional is constant, which is the case under weighted $\ell_1$-norms (capturing $k$-median and generalizations), the shadow-request rates are fixed independently of $\mc A$, and the reduction also applies to oblivious-adversary algorithms.

\subsection{Generalization to Chasing Problems}
The Poissonization approach can be generalized beyond clustering: it provides a generic way to \emph{linearize} a batched, norm-aggregated distance cost into the movement of a shadow online process. In this section, we instantiate this idea for chasing problems~\cite{Chrobak1991TheSP, Friedman93convex} (also known as Metrical Service Systems).

\begin{definition}[Batched Chasing Problem]
    An $\mc F$-chasing problem is defined on a metric space $(X,d)$ with an initial
    location $x_0$, where $\mc F \subseteq 2^X$ specifies the possible input sets. At each time $t$,
    the algorithm receives a set $A_t \in \mc F$ and must output $x_t \in A_t$. The total cost of the algorithm is
    $\sum_{t\ge1} d(x_{t-1},x_t)$.

    A batched $\mc F$-chasing problem is defined on a metric space $(X,d)$ with an
    initial location $x_0$, where $\mc F \subseteq 2^X$ specifies the possible input sets. At each time
    $t$, the algorithm receives a multiset of feasible sets $R_t \subseteq \mc F$ and a monotone norm
    $\|\cdot\|:\R^{R_t}\to \R$, and must output $x_t \in X$. The total cost of the algorithm is
    \[
        \sum_{t\ge 1} \bigg(d(x_{t-1},x_t) + \|(d(x_t,F))_{F\in R_t}\|\bigg).
    \]
\end{definition}

The batched model is motivated by settings in which many feasibility constraints or preferred regions arise simultaneously, for example, from multiple clients, sensors, tasks, or ``local minima'', while movement can occur only infrequently. Since the intersection of the regions may be empty, it is natural to use a ``soft'' penalty: incur the distance to each requested set and aggregate these distances using a norm.

\begin{example}[$k$-Server and CMC]
    The $k$-server problem on $X$ can be viewed as an $\mc F$-chasing problem on the configuration space $X \choose k$, with
    \[
    \mc F = \left \{A_x : x\in X\right\}, \text{ where } A_x := \left \{C\in {X \choose k} : x \in C\right \}.
    \]
    Upon receiving a request $x_t$, the algorithm must move to a configuration in $A_{x_t}$. The batched version is then the online CMC problem.
\end{example}

\begin{example}[Convex Body Chasing]
    In convex body chasing, the metric space is $\R^d$ with some norm and $\mc F$ contains all convex sets in $\R^d$. The problem is known to admit a deterministic $O(d)$-competitive algorithm~\cite{sellke20chase, ArgueGGT20}.

    As an example of the batched version, consider a robot choosing a location at each step. Each subsystem (e.g., safety, latency, energy, regulation) outputs a convex recommended region. The batched soft cost measures how badly each subsystem is violated, together with natural movement costs. Note that the norm allows different coefficients for different subsystems to reflect their relative importance.
\end{example}

\begin{example}[Small set chasing]
In small set chasing, the metric space $(X,d)$ is arbitrary and $\mc F$ contains all subsets of cardinality at most $w$. The problem is known to admit a randomized $O(w^2)$-competitive algorithm against oblivious adversaries~\cite{bubeck22shortest} and an $O(2^w)$-competitive deterministic algorithm~\cite{CoesterT26}.

The batched version models scenarios where at each time step one has to choose an operating location/configuration (e.g., docking station for a drone, or base location chosen by a mobile technician each day), and tasks/clients can only be served from a small set of feasible options (e.g., a limited number of access points).
\end{example}

An adaptation of \textsc{ShadowServe}, which now requires an $\mc F$-chasing algorithm, yields the more general reduction to batched $\mc F$-chasing problems (\cref{thm: mss}). The proof is similar to that of \cref{thm:any-norm}, and we defer it to Appendix~\ref{appendix:mss}.

\begin{corollary}
\label{cor:convex body}
    There is an $O(d)$-competitive deterministic\footnote{For convex body chasing, randomization does not help since an algorithm can always go to its expected location. See Appendix~\ref{appendix: derandomize} for a complete proof.} algorithm for batched convex body chasing.
\end{corollary}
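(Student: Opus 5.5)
The plan is to combine \cref{thm: mss} with the known deterministic $O(d)$-competitive convex body chasing algorithm~\cite{sellke20chase, ArgueGGT20}, and then derandomize. Take $\mc F$ to be the family of all convex sets in $\R^d$ (with the given norm as the metric). A deterministic $O(d)$-competitive algorithm is in particular a randomized $O(d)$-competitive algorithm against adaptive online adversaries, since for deterministic algorithms adaptive and oblivious adversaries coincide. The first part of \cref{thm: mss} then gives a randomized $O(d)$-competitive algorithm $\mc B$ for batched convex body chasing against adaptive online adversaries, hence also against oblivious adversaries.

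Next I would convert $\mc B$ into a deterministic algorithm $\mc D$ that at each round $t$ outputs the mean position $\bar x_t = \E[x_t]$ of $\mc B$ on the same input history. This makes sense because the adversary against a deterministic algorithm is effectively oblivious. The key point is that both cost terms are convex in the position and the relevant metric is a norm.
\begin{itemize}
\item \emph{Movement term.} $d(\bar x_{t-1},\bar x_t)=\|\E[x_{t-1}-x_t]\|\le \E\|x_{t-1}-x_t\|$ by Jensen.
\item \emph{Service term.} For each convex $F$, the map $x\mapsto d(x,F)$ is convex. A monotone norm of a vector of nonnegative convex functions is convex, because monotonicity plus the triangle inequality give $\|f(\lambda x+(1-\lambda)y)\|\le\|\lambda f(x)+(1-\lambda)f(y)\|\le \lambda\|f(x)\|+(1-\lambda)\|f(y)\|$. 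Jensen then gives $\|(d(\bar x_t,F))_F\|\le \E\|(d(x_t,F))_F\|$.
\end{itemize}
Summing over rounds, the cost of $\mc D$ on any fixed sequence is at most the expected cost of $\mc B$ on that sequence, which is at most $O(d)\cdot\OPT+\beta$.

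The main subtlety is ensuring the expectation is well defined and that $\mc D$ is online, i.e., $\bar x_t$ depends only on $R_1,\dots,R_t$. The mean is well defined because $\mc B$'s expected cost is finite, so the positions are integrable. It depends only on $R_1,\dots,R_t$ because $\mc B$'s randomness (the Poisson shadow process) is generated online from the past inputs. A second point needing care is that the derandomization must be applied against a fixed (oblivious) input sequence, and this is legitimate since $\mc D$ is deterministic. Note also that \cref{thm: mss} is invoked with the hypothesis in its adaptive-adversary form, which the deterministic convex body chasing algorithm satisfies.
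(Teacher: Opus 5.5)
Your proposal is correct and follows the paper's route. You invoke \cref{thm: mss} with the deterministic $O(d)$-competitive convex body chasing algorithm and then derandomize by outputting the mean position $\E[x_t]$, using convexity of the metric norm for the movement term and, for the service term, Jensen together with monotonicity and convexity of the batch norm applied to the convex functions $x\mapsto d(x,F)$. Your remarks on integrability of the positions (which follows from finite expected cost) and on $\E[x_t]$ depending only on $R_1,\dots,R_t$ are valid details that the paper leaves implicit.
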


\begin{corollary}
    For batched small set chasing, there is a randomized $O(2^w)$-competitive algorithm against adaptive online adversaries. If $\|\cdot\|$ is a weighted $\ell_1$-norm, then there is a randomized $O(w^2)$-competitive algorithm against oblivious adversaries.
\end{corollary}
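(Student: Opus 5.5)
This corollary follows by plugging known small set chasing algorithms into the reduction of \cref{thm: mss}. Take $\mc F$ to be the family of all subsets of $X$ of cardinality at most $w$.

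For the first statement, I would invoke the $O(2^w)$-competitive deterministic algorithm for small set chasing~\cite{CoesterT26}. A deterministic $\alpha$-competitive algorithm is in particular a randomized $\alpha$-competitive algorithm against adaptive online adversaries, because for deterministic algorithms the adaptive and oblivious adversary models coincide. The first part of \cref{thm: mss} then yields a randomized $O(2^w)$-competitive algorithm for batched small set chasing against adaptive online adversaries.

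For the second statement, I would use the randomized $O(w^2)$-competitive algorithm against oblivious adversaries~\cite{bubeck22shortest}, together with the second part of \cref{thm: mss}. That part states that the reduction also holds against oblivious adversaries when $\|\cdot\|$ is a weighted $\ell_1$-norm.

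The one point to check is why the oblivious case is legitimate, which is the remark following \cref{thm:any-norm}. For a weighted $\ell_1$-norm $\|x\| = \sum_F w_F |x_F|$, the nonnegative norming functional at any nonnegative vector is the constant vector $z=(w_F)_{F\in R_t}$. Hence the Poisson rates do not depend on the chasing algorithm's state. The shadow request stream is then a random sequence fixed independently of the algorithm's coins, that is, a distribution over oblivious sequences. Averaging the oblivious guarantee over this distribution gives the bound.

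The main potential obstacle is the additive constant $\beta$ in the oblivious guarantee. Over the distribution of shadow sequences it remains a fixed constant, so no difficulty arises and the $O(w^2)$ bound follows.
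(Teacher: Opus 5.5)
Your proposal is correct and matches the paper's (implicit) proof: the corollary comes from plugging the deterministic $O(2^w)$ algorithm of \cite{CoesterT26} and the oblivious $O(w^2)$ algorithm of \cite{bubeck22shortest} into \cref{thm: mss}, and you justify the oblivious case exactly as the paper does, by noting that under weighted $\ell_1$-norms the norming functional, and hence the shadow request distribution, does not depend on the algorithm's state. One small precision: when some distances vanish, $z=(w_F)_{F\in R_t}$ is a valid nonnegative norming functional but not the unique one, and the algorithm is simply free to always choose it.
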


Batched $\mc F$-chasing is an instance of a metrical task system (MTS)~\cite{BorodinLS92}. In full generality, MTS is quite difficult: the optimal competitive ratio on $n$-point metrics is $\Theta(n)$ for deterministic algorithms and $\Theta(\log^2 n)$ for randomized algorithms against oblivious adversaries, which become vacuous on unbounded or continuous metric spaces. Much of the literature on concrete problems (such as the $k$-server problems) can be viewed as identifying highly structured subclasses that admit better guarantees (cf.~\cite{bubeck20parametrize}). \Cref{thm: mss} offers one perspective on this direction by showing that once a problem is ``well-behaved'', its batched, norm-aggregated counterpart remains well-behaved. This perspective shares the spirit of \cite{bubeck21online} on viewing multi-server convex function chasing as non-convex optimizations (on $\R^{kd}$) with some inherent convexity.

\section{Deterministic Algorithms on the Line}
\label{sec: line det}
In this section, we prove \cref{thm:aggrdc-intro} and \cref{thm:radii-intro} by providing \emph{deterministic} $O(k)$-competitive algorithms for both the norm-of-distances and norm-of-radii objectives on the line. Note that there is an $\Omega(k)$ lower bound for deterministic algorithms, inherited from the lower bound for the $k$-server problem~\cite{MMS88}.

Our algorithm for the norm-of-distances objective can be viewed as a derandomization of the ideas from the previous section, where Poissonization is replaced by a linear combination of different movements whose speeds are scaled according to the norming functional. For the norm-of-radii objective, a new complication is a double-charging problem: Several online centers could charge their cost to the same offline center, which naively could harm the competitive ratio. To avoid this, we exploit the geometry of the line and use symmetry of the norm to redistribute charges in a way that limits the amount of double-charging. Our algorithms will adapt the Double Coverage algorithm for the $k$-server problem, and aggregate the movements with the norming functional that allows us to relate the movement cost and the clustering cost.

It may help to briefly recall the Double Coverage algorithm for the $k$-server problem on the line~\cite{CKP91new}. Upon a request, the algorithm moves the adjacent server(s)---two if the request lies between two current servers, and one if all servers lie on the same side of the request---towards the request at the same speed, and stops as soon as a server reaches the request location.

Denote by $x_1\le x_2\le\dots\le x_k$ the locations of the online servers, sorted from left to right, and by $y_1\le y_2\le\dots\le y_k$ the locations of the offline servers. The analysis uses the potential function
\[
\Phi \;=\; M+\Sigma
\]
where $M=k \sum_i |x_i-y_i|$ and $\Sigma \;=\; \sum_{i < j}|x_i-x_j|$. When the offline solution moves, we have $\Delta \Phi \le k\cdot \mov^*$. When the online solution moves, we use that there is an offline server at the requested location. If the algorithm moves two servers by a distance $L$ each, then at least one of them moves toward its matched partner, so $\Delta M \le 0$ and $\Delta \Sigma =- 2L$, and hence $\Delta \Phi + \mov \le 0$. If the algorithm moves only one server by a distance $L$, then it moves toward its matched partner, so $\Delta M = -k L$ and $\Delta \Sigma = (k-1)L$, and again $\Delta \Phi + \mov \le 0$. This completes the proof that Double Coverage is $k$-competitive~\cite{CKP91new}.

\subsection{Norm-of-Distances Objective}
\newcommand{\AggrDC}{\textsc{AggrDC}\xspace}

\begin{algorithm}[t]
\caption{\AggrDC}
\label{alg:aggrdc}
\begin{algorithmic}[1]
\Require initial configuration $C_0$ (multiset of size $k$)
\State Initialize $S(0) \gets C_0$
\For{rounds $t=1,2,\dots$}
  \State Observe request multiset $R_t$ and the norm $\|\cdot\|$ on $\mathbb{R}^{R_t}$
  \For{continuous time $\tau \in (t-1,t]$}
    \State Let $d(\tau) \gets \big(d(r,\, S(\tau))\big)_{r \in R_t}$
    \State Choose $z(\tau)\in \mathbb{R}^{R_t}_{\ge 0}$ s.t.
      $\|z(\tau)\|_* = 1$ and $\langle z(\tau), d(\tau)\rangle = \|d(\tau)\|$
    \For{$r \in R_t$} \Comment{Compute single-request Double Coverage velocities}
      \State $v_r(\tau) \gets \mathbf 0$ \Comment{Velocity induced by request $r$}
      \If {there exists a center $i$ that is the closest center on the left of $r$}
          \State $v_{r, i}(\tau) \gets d_r(\tau)$ \Comment{Left center moves right}
      \EndIf
      \If {there exists a center $i$ that is the closest center on the right of $r$}
          \State $v_{r,i}(\tau) \gets -d_r(\tau)$ \Comment{Right center moves left}
      \EndIf
    \EndFor
    \State $V(\tau) \gets \sum_r z_r(\tau) v_r(\tau)$ \Comment{Aggregate using norming functional}
    \State $S(\tau+\d\tau) \gets S(\tau) + V(\tau) \d\tau$ \Comment{Move according to velocity $V$}
  \EndFor
  \State Select $\tau_t \in \arg\min_{\tau\in(t-1,t]} \|d(\tau)\|$
  \State Set $C_t \gets S(\tau_t)$
\EndFor
\end{algorithmic}
\end{algorithm}

We first focus on the norm-of-distances objective. Similar to Section \ref{sec: norm}, we will work in continuous time $\tau \in \R_{\ge 0}$ with the requests $R_t$ taking effect during $\tau \in (t-1, t]$. We specify continuous configurations $S(\tau)$, and the discrete configuration $C_t$ is determined by the configuration with the smallest clustering cost in $(t-1, t]$.

At a high level, we move adjacent center(s) to each request as in Double Coverage, but scale the speed for movement towards any request by its current contribution to the total clustering cost. Concretely, for $\tau \in (t-1, t]$, let $d(\tau) = (d(r, S(\tau)))_{r \in R_t}$ and let $z(\tau)$ be the norming functional for $d(\tau)$ (with respect to the given norm $\|\cdot\|$). For each request $r$, we move adjacent centers of $r$ towards it at a base speed proportional to $d_r(\tau)$. A double coverage analysis shows that either we improve our potential, or there is a lower bound on the offline clustering cost of $r$ enough to pay for our clustering cost of $r$. The final velocity $V(u)$ is the accumulation of movements induced by requests, scaled by $z(\tau)$, which captures the contribution of $r$ at that time. The pseudocode is given in \cref{alg:aggrdc}.

In the main text, we present the movement rule and analysis in continuous time for ease of understanding. Appendix~\ref{appendix:discretize} discusses finite implementations with the same asymptotic competitive ratio.

\begin{theorem}
\label{thm:aggrdc}
Algorithm \textsc{AggrDC} is $O(k)$-competitive on the line.
\end{theorem}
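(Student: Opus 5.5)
We run a continuous-time potential-function analysis modeled on Double Coverage, with the potential $\Phi = M + \Sigma$, where $M = k\sum_i |x_i - y_i|$ and $\Sigma = \sum_{i<j}|x_i - x_j|$. Here $x_1\le\dots\le x_k$ are the sorted positions of $S(\tau)$, and $y_1\le\dots\le y_k$ is the offline configuration $C_t^*$, held fixed during $(t-1,t]$. Offline movement between rounds increases $\Phi$ by at most $k\cdot\mov^*_t$. The online costs are handled as in Claims~\ref{claim:alg-mov-server-mov} and~\ref{claim:alg-serv-server-mov}, but deterministically.

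First, the movement of \AggrDC over the whole run is at most the total length traveled by $S(\cdot)$, which is $\int \|V(\tau)\|_1\,\d\tau$. Second, the clustering cost satisfies $\clust_t = \|d(\tau_t)\| \le \int_{t-1}^t \|d(\tau)\|\,\d\tau = \int_{t-1}^t \sum_r z_r(\tau) d_r(\tau)\,\d\tau$. Both online quantities are therefore integrals of instantaneous rates, each decomposed over requests with weights $z_r(\tau)$.

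The key step is the following per-request inequality. Take one request $r$ and its DC velocity $v_r$: the adjacent center or centers move toward $r$ at speed $d_r$. Let $\delta^*_r = d(r, C_t^*)$. We aim to prove
\[
\frac{\d\Phi}{\d\tau}\Big|_{v_r} + \|v_r\|_1 + d_r \;\le\; c\,k\,\delta^*_r
\]
for an absolute constant $c$. The drift of $\Phi$ is linear in the velocity, so summing with weights $z_r$ gives
\[
\dot\Phi + \|V\|_1 + \|d(\tau)\| \le c k\langle z(\tau), \delta^*\rangle \le ck\|\delta^*\| = ck\cdot \clust^*_t .
\]
The last step uses $\|z\|_*=1$ and $z\ge0$ (Lemma~\ref{lemma: positive dual}). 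We use $\|V\|_1 \le \sum_r z_r\|v_r\|_1$, which holds because the drift of $\Sigma$ and $M$ under $V$ equals the weighted sum of the per-request drifts almost everywhere. This needs care where servers coincide or cross. Since DC on the line preserves order, we may relabel so that servers never cross.

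Integrating over $(t-1,t]$, summing over rounds, and using $\Phi\ge 0$ gives $\cost \le ck\,\clust^* + k\,\mov^* + \Phi_0$, which is $O(k)$-competitive.

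The main obstacle is the per-request inequality when the offline solution does not cover $r$. Classical DC uses the fact that an offline server sits at $r$. Here the nearest offline server $y_j$ is at distance $\delta^*_r$, and the argument splits into two cases.

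\emph{Case 1: $\delta^*_r \ge d_r/2$.} The potential can increase by at most $O(k)\cdot\|v_r\|_1 = O(k d_r)$, which is also bounded by $O(k\delta^*_r)$. The online terms $\|v_r\|_1 + d_r \le 3 d_r$ are bounded the same way, so the inequality holds trivially.

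\emph{Case 2: $\delta^*_r < d_r/2$.} Now $y_j$ lies strictly closer to $r$ than any online server. The moving online server or servers on the side(s) where they exist are at distance at least $d_r > \delta^*_r$ from $r$, hence still on the far side of $y_j$ relative to their own position. We then replay the DC argument with $y_j$ in place of the request point. If $r$ lies between two servers, one of the two moving servers moves toward its matched offline partner in the sorted matching, so $\dot M \le 0$ and $\dot\Sigma = -2d_r$. If $r$ lies outside all servers, the extreme server moves toward $y_j$ and toward its partner, so $\dot M = -k d_r$ and $\dot\Sigma = (k-1)d_r$.

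In both sub-cases we need $\dot\Phi \le -\|v_r\|_1$, with a leftover of size about $d_r$ to pay for the clustering term. We recover this leftover by working with a scaled potential, for example $2\Phi$, so that the slack of $\|v_r\|_1 \ge d_r$ in $2\dot\Phi + \|v_r\|_1$ covers $d_r$. Verifying this sorted-matching claim is the delicate part. It fails only when the matched partner sits between the server and $r$ but on the wrong side of $y_j$, and the threshold $d_r/2$ is what we would tune, with constant factors, to rule that out.
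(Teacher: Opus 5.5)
Your proof is correct and is essentially the paper's argument: the same Double Coverage potential scaled by $2$, the same per-request inequality $\phi'_r+\|v_r\|_1+d_r\le O(k)\,d^*_r$ aggregated with the nonnegative norming functional, and the same dichotomy. The only difference in the dichotomy is that you split on the threshold $\delta^*_r<d_r/2$, whereas the paper splits on whether some offline center lies strictly between the two moving servers (respectively, on $r$'s side of the single moving server). The step you call delicate needs no tuning and has no failure case: any $\delta^*_r<d_r$ puts an offline server strictly inside $(x_i,x_{i+1})$, and if both $y_i\le x_i$ and $y_{i+1}\ge x_{i+1}$ held, every offline server would lie outside that interval, so at least one moving server approaches its sorted partner (and in the one-server case $y_k\ge y_j>x_k$).
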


\begin{proof}
We first prove that the continuous time algorithm, which moves and clusters in continuous time (with requests in $R_t$ extended to $\tau \in (t-1, t]$), is $O(k)$-competitive, i.e.,
\[
\cost^{\text{cont}}  := \int (\|V(\tau)\|_1 + \|d(\tau)\|) \mathrm \d\tau \le O(k)\cdot \cost^*.
\]

We adapt the Double-Coverage analysis via the potential
\[
    \Phi(\tau) \;:=\; 2\bigl(k\cdot \Match\bigl(S(\tau), C^*(\tau)\bigr) \;+\; \Sigma\bigl(S(\tau)\bigr)\bigr),
\]
where $C^*(\tau) = C^*_{\lceil \tau \rceil}$ is the optimal offline configuration extended to continuous time, and $\Sigma(S)$ is the sum of pairwise distances between %
online centers.%

When the offline solution moves, $\Sigma$ is unchanged, and $\Match$ increases by at most the offline movement; hence, the total increase in $\Phi$ due to offline moves is at most $O(k)\,\mov^*$.

Fix the round $t$ and consider $\tau\in(t-1,t]$. Let $d^*(\tau):=\bigl (d(r, C^*(\tau))\bigr)_{r \in R_t} \in \R^{R_t}_{\ge 0}$. Recall that $v_r(\tau)$ is the velocity induced by request $r$ and let $\phi'_r(\tau)$ be the corresponding instantaneous change in the potential (not scaled by $z_r(\tau)$).

\begin{claim}
\label{claim:aggrdc-local}
For every $\tau \in (t-1, t]$ and any request $r$,
\[
\phi'_r(\tau) + \|v_r(\tau)\|_1 + d_r(\tau)\le 4k\cdot d^*_r(\tau).
\]
\end{claim}

\begin{proof}
We consider two cases based on the number of moving centers.
\begin{enumerate}
    \item {Two centers move.} In this case, $\|v_r(\tau)\|_1 = 2 d_r(\tau)$.
    If there is an offline center between them, then $\Match'_r(\tau) \le 0$ and $\Sigma'_r(\tau) = -2d_r(\tau)$, so
    \[
        \phi'_r(\tau) + \|v_r(\tau)\|_1 + d_r(\tau)\le 2(k \cdot 0 -2d_r(\tau)) + 2d_r(\tau) + d_r(\tau) \le 0.
    \]
    If there is no offline center between them, then $d_r^*(\tau) \ge d_r(\tau)$. We have $\Match'_r(\tau) \le 2d_r(\tau)$ and $\Sigma'_r(\tau) = -2d_r(\tau)$, and hence
    \[
        \phi'_r(\tau) + \|v_r(\tau)\|_1 + d_r(\tau) \le 2(k \cdot 2d_r(\tau) - 2d_r(\tau)) + 2d_r(\tau) + d_r(\tau) \le 4k \cdot d_r^*(\tau).
    \]

    \item {One center moves.} In this case, $\|v_r(\tau)\|_1 = d_r(\tau)$.
    If there is an offline center on the same side of the moving online center as the request $r$, then $\Match'_r(\tau) = -d_r(\tau)$ and $\Sigma'_r(\tau) = (k-1)d_r(\tau)$, so
    \[
        \phi'_r(\tau) + \|v_r(\tau)\|_1 + d_r(\tau)\le 2(k \cdot (-d_r(\tau))+ (k-1)d_r(\tau)) + d_r(\tau) + d_r(\tau)\le  0.
    \]
    Otherwise, all offline centers are further from $r$ than the online center, so $d_r^*(\tau) \ge d_r(\tau)$, and we have
    \[
 \phi'_r(\tau) + \|v_r(\tau)\|_1 + d_r(\tau) \le 2(k \cdot d_r(\tau) + (k-1)d_r(\tau)) + d_r(\tau) + d_r(\tau) \le 4k \cdot d_r^*(\tau). \qedhere
    \]
\end{enumerate}
\end{proof}

Let $s(\tau) := ( \|v_r(\tau)\|_1 )_{r\in R_t}$ and $\phi'(\tau):=(\phi'_r(\tau))_{r\in R_t}$.
The total amortized instantaneous cost is
\begin{align*}
\Phi'(\tau) + \|V(\tau)\|_1+\|d(\tau)\| &\le
 \langle z(\tau),\,\phi'(\tau)\rangle + \langle z(\tau), s(\tau)\rangle + \langle z(\tau),\, d(\tau)\rangle \\
&= \langle z(\tau),\, \phi'(\tau) + s(\tau) + d(\tau)\rangle \\
&\le 4k \cdot \langle z(\tau),\, d^*(\tau)\rangle \\
&\le 4k \, \|d^*(\tau)\|,
\end{align*}
where the last inequality follows from $\|z(\tau)\|_* = 1$. Integrating over time completes the proof that the continuous time algorithm is $O(k)$-competitive.

We conclude the proof by noting that
\[
\cost \le \cost^{\text{cont}}
\]
by the triangle inequality and the fact that we pick the smallest clustering cost, similar to \cref{claim:alg-mov-server-mov} and \cref{claim:alg-serv-server-mov}.
\end{proof}

\subsection{Norm-of-Radii Objective}
\label{sec: norm of radii}
\newcommand{\radAggrDC}{\textsc{RadAggrDC}\xspace}
We now present our algorithm for the norm-of-radii objective, \radAggrDC, in \Cref{alg:radaggrdc}. It is similar in spirit to \AggrDC, except that for each center we only consider its farthest request. The main differences are marked by surrounding boxes. In Lines~\ref{Line:for} and~\ref{Line:r_c}, the cluster of a center $c$ refers to all requests in $R_t$ assigned to $c$ as their closest center, breaking ties by the smallest center index (so $\rad_c$ is the maximum distance between $c$ and a request in its cluster, or zero for an empty cluster).

\begin{algorithm}[h]
\caption{\radAggrDC}
\label{alg:radaggrdc}
\begin{algorithmic}[1]
\Require initial configuration $C_0$ (multiset of size $k$)
\State Initialize $S(0) \gets C_0$
\For{rounds $t=1,2,\dots$}
  \State Observe request multiset $R_t$ and the norm $\|\cdot\|$ on $\mathbb{R}^{k}$
  \For{continuous time $\tau \in (t-1,t]$}
    \State \fbox{Let $\rad(\tau) \in \R^k$ be the vector of the radii}
    \State Choose $z(\tau)\in \mathbb{R}^{k}_{\ge 0}$ s.t.
      $\|z(\tau)\|_* = 1$ and $\langle z(\tau), \rad(\tau)\rangle = \|\rad(\tau)\|$
    \State \Comment{Norming functional for $\rad(\tau)$.}
    \For{$c \in [k]$ with a non-empty cluster} \label{Line:for}
      \Comment{Compute single-request velocities}
      \State $v_c(\tau) \gets \mathbf 0$ \Comment{Velocity induced by center $c$}
      \State \fbox{$r_c \gets$ the farthest request in the cluster of $c$}\label{Line:r_c}
      \If {there exists a center $i$ that is the closest center on the left of $r_c$}
          \State $v_{c, i}(\tau) \gets$ \fbox{$\rad_c(\tau)$} \Comment{Left center moves right}
      \EndIf
      \If {there exists a center $i$ that is the closest center on the right of $r_c$}
          \State $v_{c,i}(\tau) \gets$ \fbox{$-\rad_c(\tau)$} \Comment{Right center moves left\footnotemark}
      \EndIf
    \EndFor
    \State $V(\tau) \gets \sum_c z_c(\tau) v_c(\tau)$ \Comment{Aggregate using norming functional}
    \State $S(\tau+\d\tau) \gets S(\tau) + V(\tau) \d\tau$ \Comment{Move according to velocity $V$}
  \EndFor
  \State Select $\tau_t \in \arg\min_{\tau\in(t-1,t]} \|\rad(\tau)\|$.
  \State Set $C_t \gets S(\tau_t)$
\EndFor
\end{algorithmic}
\end{algorithm}
\footnotetext{When $\rad_c>0$, one of the selected centers is at the location of $c$.}

We remark that the algorithm is slightly informal. For example, under any fixed tie-breaking rule for choosing the farthest point $r_c$ in Line~\ref{Line:r_c}, the algorithm will oscillate and fail to be well-defined if $c$ has two equally distant farthest points (in that case, $c$ should stay put). Nevertheless, we choose to present the continuous analysis that we feel expresses the idea more cleanly. Appendix~\ref{appendix:discretize} gives a formal finite implementation by splitting the continuous interval into sufficiently short time steps. This approach avoids the issue of well-definedness and retains the asymptotic guarantee.

The main new challenge for the norm-of-radii objective is a double-charging problem. In the norm-of-distances analysis, the online cost of a request is charged towards the offline cost of the request (plus the change in potential), so the resulting offline costs are disjoint and can be accumulated (up to the norming functional). Here, however, each offline center only contributes its \emph{radius} once. Our analysis for each online center $c$ charges to the cost of the \emph{offline center} that serves the farthest request of $c$. Since many different online centers can have farthest requests served by the same offline center, these lower bounds are not naturally disjoint, and directly accumulating them may overcount the offline cost.

We resolve this by exploiting two structural properties: on the line, the geometry of clusters keeps the amount of overlap within a constant factor, and for symmetric norms, merging multiple charges into a single coordinate can only increase the norm. Together, these imply that the possible overlap only contributes a constant factor.

\begin{theorem}
\label{thm:radaggrdc}
Algorithm \radAggrDC{} is $O(k)$-competitive on the line for the norm-of-radii objective when the norm at every round $t$ is symmetric.
\end{theorem}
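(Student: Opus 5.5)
We follow the structure of the proof of \cref{thm:aggrdc}. First we show that the continuous-time algorithm, whose cost is $\int(\|V(\tau)\|_1+\|\rad(\tau)\|)\,\d\tau$, is $O(k)$-competitive. Then $\cost\le\cost^{\text{cont}}$ follows as before, by the triangle inequality and the choice of $\tau_t$ minimizing $\|\rad(\tau)\|$. We use the same potential
\[
\Phi(\tau)=2\bigl(k\cdot\Match(S(\tau),C^*(\tau))+\Sigma(S(\tau))\bigr).
\]
Offline moves raise $\Phi$ by at most $O(k)\mov^*$. By Appendix~\ref{appendix: association}, we may assume the offline solution uses nearest-center association, losing only a constant factor; let $\rad^*(\tau)\in\R^k$ be its radius vector.

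\textbf{Local inequality.} For each online center $c$ with a nonempty cluster, let $r_c$ be its farthest request and let $\phi'_c$ be the potential change induced by $v_c$. We rerun the case analysis of \cref{claim:aggrdc-local} with $d_r$ replaced by $\rad_c=d(r_c,S)$. Either some offline center lies between the moving online centers (or on the request's side of the single moving center), in which case
\[
\phi'_c+\|v_c\|_1+\rad_c\le 0,
\]
or else $d(r_c,C^*)\ge\rad_c$. In the latter case, let $\pi(c)$ be the offline center serving $r_c$. Then $\rad^*_{\pi(c)}\ge d(r_c,C^*)\ge \rad_c$, and
\[
\phi'_c+\|v_c\|_1+\rad_c\le 4k\,\rad^*_{\pi(c)}.
\]
Call such $c$ \emph{charging}. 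Aggregating with $z(\tau)$ as before gives
\[
\Phi'+\|V\|_1+\|\rad\|\le 4k\sum_{c\text{ charging}} z_c\,\rad^*_{\pi(c)}=4k\langle z,w\rangle,
\]
where $w_c=\rad^*_{\pi(c)}$ for charging $c$ and $w_c=0$ otherwise. Since $\|z\|_*=1$, it suffices to show $\|w\|\le O(1)\|\rad^*\|$.

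\textbf{Double charging (main obstacle).} The vector $w$ may repeat the same $\rad^*_j$ in many coordinates, so we need geometry to bound the multiplicity. The claim we aim for is that on the line, each offline center $j$ can be $\pi(c)$ for at most $O(1)$ charging $c$, or more precisely that $w$ can be dominated by a sum of $O(1)$ vectors, each a coordinate-permutation of a vector $\le\rad^*$.

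To prove this, fix $j$ and consider the charging centers with $\pi(c)=j$. Their farthest requests $r_c$ lie in the offline cluster interval $I_j$ of $j$, and each satisfies $d(r_c,C^*)\ge\rad_c$. Online clusters are disjoint intervals ordered along the line. We would argue as follows. If three or more online clusters have their farthest points in $I_j$, then every middle cluster lies entirely inside $I_j$. A charging middle center $c$ has radius at most the distance from $r_c$ to $y_j$, so it cannot be ``far'' from $y_j$ relative to its cluster. The non-charging cases of the local inequality (an offline center between the moving centers) should exclude all but the one or two online clusters adjacent to $y_j$ on each side. This would give multiplicity at most about $4$.

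If this counting fails, the fallback is weaker. We would use instead that the middle clusters are nested in $I_j$ with disjoint extents, so their radii sum to at most $|I_j|\le 2\rad^*_j$. We would then redistribute: replace the charges $\rad^*_j$ on the middle clusters by their own $\rad_c$, which are valid upper bounds for the needed terms since $\rad_c\le d(r_c,C^*)$. The aggregate then becomes a vector whose mass from $j$ totals $O(\rad^*_j)$.

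\textbf{Symmetric norm step.} Finally we use symmetry and convexity. Suppose the charges assigned to $j$ are nonnegative numbers $a_1,\dots,a_m$ on distinct coordinates with $\sum a_i\le O(\rad^*_j)$. Under a symmetric monotone norm, spreading these values over several coordinates costs no more than merging them into a single coordinate, because the norm is convex and permutation-invariant, and the merged vector majorizes the spread one. Applying this for all $j$ simultaneously, with the $j$'s placed on distinct coordinates, gives $\|w\|\le O(1)\|\rad^*\|$. Integrating over $\tau$ then completes the proof. We expect the multiplicity/redistribution argument of the previous paragraph to be the delicate part.
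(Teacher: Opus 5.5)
Your fallback argument is correct and is essentially the paper's proof. The paper bounds each charge by $4k\min\{\rad_c,\rad^*_{\chi(c)}\}$. For clusters charged to $j$ whose centers lie in the hull $I$ of $j$'s requests, the center-to-$r_c$ segments lie in $I$ with disjoint interiors, so these radii sum to at most $|I|\le 2\rad^*_j$. At most two further clusters, one per side, can be charged to $j$. The paper then applies an aggregation lemma for symmetric norms, proved by random placement and Jensen; this is equivalent to your majorization argument.

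You should commit to this route and drop the first plan. The bounded-multiplicity claim is false, so the redistribution is the necessary step, not a ``weaker'' fallback. For a counterexample, put one offline center $j$ at $0$ whose cluster consists of requests $q_1<\dots<q_k=R$ in $(0,R]$, with the other $k-1$ offline centers far to the left and empty. Place the $k$ online centers just to the left of the $q_i$. No offline center lies between consecutive online centers, or to the right of the last one, so every online cluster is charging with $\pi(c)=j$. Your $w$ then has $k$ coordinates equal to $R$, while $\rad^*=(R,0,\dots,0)$. Under $\ell_1$ the inequality $\|w\|\le O(1)\|\rad^*\|$ fails by a factor $k$, and $w$ is not dominated by $O(1)$ permuted copies of vectors below $\rad^*$.

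Two smaller corrections follow.

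First, you justify replacing $\rad^*_{\pi(c)}$ by $\rad_c$ for a charging $c$ via $\rad_c\le d(r_c,C^*)$. That inequality points the wrong way for this purpose. The correct reason is that the charging-case computations in \cref{claim:aggrdc-local} already give $\phi'_c+\|v_c\|_1+\rad_c\le 4k\rad_c$ before $\rad_c\le d(r_c,C^*)$ is ever invoked. This is exactly the paper's remark that the movement rule also gives $s_c\le 4k\rad_c$.

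Second, reducing to nearest-center offline association is harmless but unnecessary. The bound $|I_j|\le 2\rad^*_j$ holds for the hull of $j$'s assigned requests under any assignment. The ordering facts you use concern only the online clusters, which are nearest-center by definition.
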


\begin{proof}
Similar to \Cref{thm:aggrdc}, we work in continuous time first with the same potential
\[
    \Phi(\tau) \;:=\; 2\bigl(k\cdot \Match\bigl(S(\tau), C^*(\tau)\bigr) \;+\; \Sigma\bigl(S(\tau)\bigr)\bigr),
\]
where $C^*(\tau)$ is the offline configuration with $C^*(\tau) = C^*_{\lceil \tau\rceil}$, and $\Sigma(S)$ is the sum of pairwise distances between online centers.

When the offline solution moves, $\Sigma$ is unchanged, and $\Match$ increases by at most the offline movement; hence, the total increase in $\Phi$ due to offline moves is at most $O(k)\cdot\mov^*$.

Fix the round $t$ and consider $\tau\in(t-1,t]$. Let $\rad^*(\tau)$ denote the offline radii vector. For each nonempty online cluster $c$, recall $r_c$ is its selected farthest request and let $\chi(c)$ be the offline center assigned to $r_c$; extend $\chi$ arbitrarily to empty clusters. Recall that $v_c(\tau)$ is the velocity induced by $c$ and let $\phi_c'(\tau)$ be the corresponding instantaneous change in potential (not scaled by $z_c(\tau)$).
For empty clusters, set $v_c(\tau)=\mathbf 0$ and $\phi'_c(\tau)=0$.

\newcommand{\wrad}{s}
Let the instantaneous charges be
\[
\wrad_c(\tau):=\max\bigl(0,\phi'_c(\tau)+\|v_c(\tau)\|_1+\rad_c(\tau)\bigr).
\]

\begin{claim}
\label{claim:radaggrdc-local}
For every $\tau\in(t-1,t]$ and every offline center $j$,
\[
\sum_{c\in\chi^{-1}(j)}s_c(\tau)\le16k\rad_j^*(\tau).
\]
\end{claim}
\begin{proof}
For each nonempty cluster $c$, \Cref{claim:aggrdc-local} gives
$s_c(\tau)\le4kd(r_c,C^*(\tau))\le4k\rad^*_{\chi(c)}(\tau)$.
The movement rule also gives $s_c(\tau)\le4k\rad_c(\tau)$. Hence
\[
s_c(\tau)\le4k\min\{\rad_c(\tau),\rad^*_{\chi(c)}(\tau)\}.
\]

We then bound the overlap using the cluster geometry, illustrated in~\cref{fig:illustration}.
Fix an offline center $j$ with a nonempty cluster, and let $I$ be the
convex hull of its assigned requests, so $|I|\le2\rad_j^*(\tau)$.
For each nonempty online cluster $c\in\chi^{-1}(j)$ whose center lies
in $I$ at time $\tau$, the segment from its location to
$r_c$ lies in $I$. These segments have disjoint interiors, so their radii sum to at most $|I|$.
At most two nonempty online clusters in $\chi^{-1}(j)$ have centers
outside $I$, one on each side: only the nearest online location on
either side can serve a request in $I$. Thus,
\begin{equation*}
\sum_{c\in\chi^{-1}(j)}\min\{\rad_c(\tau),\rad_j^*(\tau)\}
\le|I|+2\rad_j^*(\tau)\le4\rad_j^*(\tau).
\end{equation*}
Multiplying by $4k$ bounds the total charge as claimed.
\end{proof}

\begin{figure}[t]
  \centering

    \begin{tikzpicture}[x=1.2cm,y=1.0cm, line cap=round]
      \draw (-6.2,0) -- (6.2,0);

      \foreach \x in {-3.5,-3,3,3.5}{
        \node at (\x,0) {$\times$};
      }

      \node at (-3.5, -0.35) {$r_1$};
      \node at (3.5, -0.35) {$r_2$};

      \foreach \x in {-6,-0.5,0.5,6}{
        \node at (\x,-1) {\Large $\circ$};
      }

      \node at (-0.5, -0.7) {$c$};
      \node at (-6, -0.7) {$c_1$};
      \node at (6, -0.7) {$c_2$};

      \node at (0, -1.6) {$j$};

      \draw[line width=0.5mm,->] (-5.8,-0.5) -- (-5.5,-0.5);
      \draw[line width=0.5mm,->] (-0.7,-0.5) -- (-1,-0.5);

      \node (T0) at (0,-2) {$\triangle$};

      \draw[->] (-0.3,-2) -- (-3.5,-2);
      \draw[->] (0.3,-2) -- (3.5,-2);

      \draw[->] (-5.8,-1) -- (-3.5,-1);
      \draw[->] (-0.7,-1) -- (-3,-1);
      \draw[->] (5.8,-1) -- (3.5,-1);
      \draw[->] (0.7,-1) --(3,-1);
    \end{tikzpicture}

  \caption{Illustration for \cref{claim:radaggrdc-local}. Crosses represent requests, circles represent online centers, and the triangle represents the relevant offline center (the remaining offline centers are omitted). The arrows indicate the span of the clusters around centers. All four online centers have a farthest request served by $j$. Two lie inside the convex hull of its requests and two outside, illustrating the two contributions to the charging bound.}

  \label{fig:illustration}
\end{figure}

It remains to show that while multiple online centers may charge to the same offline center, pretending that they are charged to different offline centers provides a lower bound on the offline cost.

\begin{lemma}[Aggregation]
\label{lem:aggregation}
Let \(\|\cdot\|\) be a symmetric norm on \(\mathbb R^m\). Let
\(x\in \mathbb R^m_{\ge 0}\), and let \(\chi:[m]\to[m]\). Define a vector $y$ by $y_j := \sum_{i\in \chi^{-1}(j)} x_i$.
Then, $\|x\| \le \|y\|$.
\end{lemma}

\begin{proof}
For each \(j\) with \(y_j>0\), choose a random index
\(I_j\in \chi^{-1}(j)\) with probability distribution \[\mathbb P[I_j=i] = \frac{x_i}{y_j}.\]
Define a random vector \(Y\in\mathbb R^m\) by putting \(y_j\) at
coordinate \(I_j\) for every such \(j\), and putting \(0\) in all other
coordinates. Then \(\mathbb E [Y]=x\). Moreover, for every outcome, \(Y\)
is a permutation of \(y\). Hence, by convexity and symmetry of the norm,
\[
    \|x\|=\|\mathbb E [Y]\|
    \le \mathbb E[\|Y\|]
    = \|y\|. \qedhere
\]
\end{proof}

By \Cref{claim:radaggrdc-local}, \cref{lem:aggregation}, and monotonicity,
\[
\|s(\tau)\|\le16k\|\rad^*(\tau)\|.
\]

We are now ready to bound the total instantaneous cost.
\begin{align*}
 \Phi'(\tau) +\|V(\tau)\|_1 + \|\rad(\tau)\| &\le
\langle z(\tau),\phi'(\tau)\rangle + \langle z(\tau),(\|v_c(\tau)\|_1)_{c\in [k]}\rangle + \langle z(\tau), \rad(\tau)\rangle
\\&\le \langle z(\tau), \wrad(\tau)\rangle
\\&\le  \|\wrad(\tau)\|\\
&\le 16k \|\rad^*(\tau)\|.
\end{align*}

Integrating over time bounds the cost of the continuous time algorithm by
\[
\cost^{\text{cont}}  := \int (\|V(\tau)\|_1 + \|\rad(\tau)\|) \mathrm \d\tau \le O(k)\cdot \cost^*.
\]

We conclude the proof by noting that $\cost \le \cost^{\text{cont}}$ by the triangle inequality and the fact that we cluster at the smallest cost, similar to \cref{claim:alg-mov-server-mov} and \cref{claim:alg-serv-server-mov}.
\end{proof}

\section{Conclusion and Discussion}
\label{sec: conclusion}
In this work, we study online clustering with movement cost under two natural objectives. For the norm-of-distances objective, we establish a tight connection to the online $k$-server problem. In particular, we give a randomized $O(k)$-competitive algorithm on general metrics, substantially generalizing previous guarantees, which were known only for the $k$-median objective on the line when $k \le 2$. Our main technical tool is a Poissonization-based reduction, with rates driven by a time-evolving norming functional, which may be viewed more broadly as a way to linearize batched requests into a stream of singleton requests, and we extend it to yield a general reduction for chasing problems.

For the line metric, we further obtain deterministic $O(k)$-competitive algorithms for both the norm-of-distances and norm-of-radii objectives. An interesting direction for future work is to understand whether similar deterministic guarantees can be extended beyond the line, perhaps by adapting the Work Function Algorithm and its analysis.

Another interesting direction, specific to the norm-of-radii objective, is to work beyond symmetric norms, even on the line. As we discuss in Appendix~\ref{appendix: asymm}, this appears to require some fundamentally new ideas. The asymmetric version shares some flavor of the weighted $k$-server problem, though competitiveness for an asymmetric norm-of-radii objective would not by itself imply a competitive algorithm for weighted $k$-server, since the movement costs of different centers remain identical.

\printbibliography

\appendix
\section{Monotonicity of the Norm}
\subsection{Non-Monotone Norm Gives Unbounded Competitive Ratio}
\label{appendix:non-monotone}
Here we show that online CMC with non-monotone norms has an unbounded competitive ratio, even when $k=1$. At a very high level, this is because monotonicity ensures that composing the norm with coordinatewise convex distance functions preserves convexity, which is not true for non-monotone norms. A similar idea has been used in \cite{bubeck21online} for impossibility results for multi-server convex function chasing.

\begin{theorem}
\label{thm:non-monotone-lower}
There does not exist a competitive algorithm for online CMC with a single center if the norm is allowed to be non-monotone. This holds even for randomized algorithms against oblivious adversaries, on the line.
\end{theorem}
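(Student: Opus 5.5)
The plan is to exploit non-monotonicity to build, with a single request multiset on the line, a clustering cost $x\mapsto \Clust(R,\{x\})$ that is \emph{non-convex} with two far-apart ``wells.'' A two-round randomized instance then forces the online algorithm to guess the correct well, while the offline solution pays almost nothing. The argument uses Yao's principle: a random input distribution against which every deterministic algorithm does badly, which covers randomized algorithms against oblivious adversaries.

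\emph{Step 1 (two-well gadget).} Take $R=\{m-D,\,m+D,\,m\}$ and define the norm on $\R^3$ by
\[
\|(a,b,c)\| \;=\; W\,|a+b-2c| \;+\; \varepsilon\,(|a|+|b|+|c|).
\]
This is a norm, being the sum of a seminorm and a positive multiple of $\ell_1$. It is not monotone: $(1,1,0)$ has large norm while the coordinatewise larger $(1,1,1)$ has small norm. For a center at $x$ the distance vector is $(|x-m+D|,|x-m-D|,|x-m|)$.
\begin{itemize}
\item If $|x-m|\le D$, then $a+b=2D$, so the $W$-term equals $2W(D-|x-m|)$.
\item If $|x-m|\ge D$, then $a+b=2|x-m|=2c$, so the $W$-term vanishes.
\end{itemize}
Hence the cost is at most $O(\varepsilon(D+|x-m|))$ on the two wells $x=m\pm D$ (and beyond them), and equals $2WD$ plus lower-order terms at $x=m$. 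This is a routine calculation.

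\emph{Step 2 (instance).} Use the previous gadget's wells $p$ and $q$, choosing the hidden well $p\in\{p,q\}$ uniformly at random. Round 1 is a gadget with wells at $p$ and $q$. Round 2 is a gadget with wells at $p$ and a new point $q'$ chosen so that the gadget interval $[\min(p,q'),\max(p,q')]$ contains $q$ well in its interior, at distance $\Omega(D)$ from both endpoints. For instance, put $q'$ on the far side of $q$ from $p$ at distance $D$ beyond $q$. The offline solution sits at $p$ throughout, starting there after a movement of $O(D)$. Its cost is therefore $O(D)+O(\varepsilon D)$. To remove the initial movement, instead repeat round 2 many ($N$) times. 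The offline cost is then $O(D+\varepsilon N D)$, while the online cost is bounded below as in Step 3.

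\emph{Step 3 (online lower bound).} After round 1, the online algorithm's state is independent of which of $p,q$ is the hidden well, by symmetry of round 1 under swapping $p$ and $q$. For the symmetry to be exact, place the initial center at the midpoint. In round 2 the algorithm must either move to within $o(D)$ of $p$ or $q'$, or pay $\Omega(WD)$. A move to $q'$ costs at least $D$, and its later clustering in repeated rounds is $\varepsilon$-small, so ending at $q'$ is also acceptable. Fix this as follows: let the round-2 gadget be given only by $R$ and the norm, with $q'$ chosen so that $\min(d(x,p),d(x,q'))\ge D$ for every $x$ near $q$. If the algorithm sits near $q$, which is the wrong guess with probability $1/2$, it pays at least $\min(D,\Omega(WD))=\Omega(D)$. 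Thus the expected online cost is $\Omega(D)$.

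The ratio obtained so far is only constant, so the real lever is iteration. Repeat the guessing game for $n$ phases. In phase $i$ the true well $p_i$ persists into the next gadget, and the other well is fresh and placed so that being at the wrong well costs $\Omega(D)$. The offline solution pays zero movement across phases if every phase shares the same $p$. But then $p$ would be revealed after phase 2. Instead, I would make each new gadget's true well a random one of the two current wells, shifted by a tiny amount $\delta\ll D$, so that the offline movement per phase is only $\delta$. Online still errs with probability $1/2$ per phase, paying $\Omega(D)$. This gives
\[
\frac{\text{online}}{\text{offline}} \;=\; \frac{\Omega(nD)}{O(D+n(\delta+\varepsilon D))}\;\longrightarrow\;\infty,
\]
with scaling $D\to\infty$ absorbing the additive constant $\beta$.

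The main obstacle is this last step: designing the phase-to-phase geometry so that, with wells shifted by only $\delta$, a wrong guess still costs $\Omega(D)$ online while the offline solution moves only $\delta$ per phase. Verifying this is where I expect the case analysis to concentrate.
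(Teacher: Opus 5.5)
Your Step~1 gadget is correct and already encodes a feasible region: with $U=(-\infty,m-D]\cup[m+D,\infty)$, its cost is $2W\,d(x,U)\ge 2\,d(x,U)$ off $U$ and $3\varepsilon|x-m|$ on $U$. Note that the cheap set is the two rays outside the interval, not only the two wells. The genuine gap is the amplification step. Steps~2--3 give only a constant lower bound on the ratio, and the theorem is exactly the claim that no constant suffices. You leave that step open, and the scheme you sketch cannot work. For the offline solution to pay only $\delta$ per phase, the well it sits at must persist (up to $\delta$) from phase to phase. But each gadget is revealed before the center is chosen, and the fresh well is $\Omega(D)$ away from the old ones. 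So the online algorithm reads off which old well survived and moves onto it. After its first guess it pays $O(\delta)$ per phase, just like the offline solution, and the ratio stays $O(1)$. If instead the survivor is drawn at random from the two current wells, then with probability $1/2$ it is the fresh one, about $2D$ from where the offline solution was. The offline solution then also pays $\Omega(D)$ per phase in expectation, and the ratio is again constant. An unbounded ratio against randomized algorithms and oblivious adversaries needs a substantially more elaborate hard distribution, and the proposal does not supply one.

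The paper avoids building such a distribution. It reduces from chasing unions of two intervals in $[0,1]$, which has no randomized competitive algorithm against oblivious adversaries by \cite[Lemma~2.3]{bubeck21online}. Its gadget uses the requests $\{0,a_t,b_t,m_t,c_t,d_t,1\}$ with $m_t=(b_t+c_t)/2$ and the norm $|v_{a_t}+v_{d_t}-(d_t-a_t)(v_0+v_1)|+|v_{b_t}+v_{c_t}-2v_{m_t}|+2^{-t}\|v\|_1$. The second term is essentially your gadget. The first term is homogeneous because it multiplies by $v_0+v_1$, which equals $1$ on $[0,1]$, and it additionally confines the center to $[a_t,d_t]$. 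Together they encode arbitrary sets $U_t=[a_t,b_t]\cup[c_t,d_t]$, with cost $O(2^{-t})$ on $U_t$ and at least $2\,d(x,U_t)$ off it. Offline, one follows the chasing solution at extra cost $O(1)$. Online, one projects each CMC center $x_t$ to its nearest point $x_t'\in U_t$. The bound $|x_t'-x_{t-1}'|\le |x_t-x_{t-1}|+d(x_t,U_t)+d(x_{t-1},U_{t-1})$ then shows the chasing cost is at most the CMC cost. So a competitive CMC algorithm would yield a competitive chasing algorithm, a contradiction. What you are missing is this transfer to a known impossibility result. If you keep your three-point gadget, you would also need a hardness result for the narrower family of complements of a single interval; the paper's confining term makes that unnecessary.
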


\begin{proof}
We reduce from the problem of chasing unions of two intervals on the line metric $[0, 1]$: The algorithm starts at a given location $y_0 \in [0, 1]$. At each round $t$, the algorithm receives
\[
U_t = [a_t,b_t]\cup[c_t,d_t]\subseteq [0,1],
\qquad a_t \le b_t \le c_t \le d_t,
\]
chooses $y_t\in U_t$, and pays the movement cost
$\sum_t |y_t-y_{t-1}|$. It is shown in~\cite[Lemma 2.3]{bubeck21online} that no randomized competitive algorithm exists for this problem against an oblivious adversary. We will show that a $C$-competitive algorithm for online CMC with non-monotone norms implies a $C$-competitive algorithm for chasing unions of two intervals, and hence no such algorithm can exist.

We will simulate any instance of chasing unions of two intervals by an online CMC instance with $k=1$. For a round $t$ with request $U_t = [a_t, b_t] \cup [c_t, d_t]$ for $a_t \le b_t \le c_t \le d_t$, let $m_t = (b_t+c_t)/2$. Define the request set $R_t := \{0,a_t,b_t,m_t,c_t,d_t, 1\}$. Let the norm $\|\cdot\|: \R_{\ge 0}^{R_t} \to \R$ be defined by\footnote{Note that $v_1+v_0=1$ always holds, but replacing $v_1+v_0$ by $1$ would violate homogeneity and make $\|v\|$ not a norm.}
\[
\|v\|
:=
\bigl|v_{a_t}+v_{d_t}-(d_t-a_t)(v_1+v_0)\bigr|
+
\bigl|v_{b_t}+v_{c_t}-2v_{m_t}\bigr|
+
\varepsilon_t \|v\|_1,
\]
where $\varepsilon_t = 2^{-t}$ is a small number that is only used to make $\|\cdot\|$ a norm (instead of a semi-norm).

Let the clustering cost at time $t$ when the center is at $x$ be
\[
\clust_t(x)
:=
\left\|
\bigl(|x-r|\bigr)_{r\in R_t}
\right\|.
\]

We note two properties of $\clust_t$.

\begin{enumerate}
    \item If $x \in U_t$, then $\clust_t(x)\le 5\varepsilon_t$. Indeed, since $x\in[a_t,d_t]$, we have
    \[
    |x-a_t|+|x-d_t|-(d_t-a_t)=d_t-a_t-(d_t-a_t)=0.
    \]
    Also, since $x\notin(b_t, c_t)$, we have
    \[
    |x-b_t|+|x-c_t|-2|x-m_t|=0.
    \]
    Both absolute-value terms vanish, and only the $\varepsilon_t\|\cdot\|_1$ term remains, which is at most $5\varepsilon_t$.

    \item If $x \notin U_t$, then $\clust_t(x)\ge 2d(x,U_t)$. To see this:
    \begin{itemize}
        \item If $x\in[0,a_t]\cup[d_t,1]$, then
        \[
        |x-a_t|+|x-d_t|-(d_t-a_t)
        = 2d(x,U_t),
        \]
        and other terms are non-negative.
        \item If $x\in[b_t,c_t]$, then
        \[
        |x-b_t|+|x-c_t|-2|x-m_t| = 2d(x,U_t),
        \]
        and other terms are non-negative.
    \end{itemize}
\end{enumerate}

Repeating the request many times (with vanishing $\varepsilon_t$) forces any finitely competitive algorithm to move the center $x_t$ to $U_t$, thereby simulating the chasing problem. We state the full reduction for completeness.

For offline algorithms: Let $y_t$ be the location of an offline algorithm for chasing unions of two intervals. The CMC algorithm just follows it by setting $x_t = y_t$. The movement cost is the same, and the service cost is at most $5 \sum_t \varepsilon_t \le 5$. Thus, $\cost^*_{\text{CMC}} \le \cost^*_{\text{chasing}} + 5$, where $\cost^*_{\text{CMC}}$ and $\cost^*_{\text{chasing}}$ denote the optimal offline cost of the two instances, respectively.

Conversely, suppose we have an online CMC algorithm with center $x_t$ at time $t$ that pays $\cost_{\text{CMC}}$. Let $x'_t$ denote the closest point to $x_t$ in $U_t$. Since $x'_t \in U_t$, setting the location of the server to $x'_t$ is a feasible online algorithm for chasing unions of two intervals. By property 2 above, we have $\clust_t(x_t) \ge 2d(x_t, U_t) = 2d(x_t, x'_t)$. The movement can be bounded by
\begin{align*}
    \cost_{\text{chasing}}=\sum_t |x'_t-x'_{t-1}|
    &\le
    \sum_t |x_t-x_{t-1}|+2\sum_t d(x_t,x'_t)
    \\&\le
    \sum_t |x_t-x_{t-1}|+\sum_t \clust_t(x_t)
    \le
    \cost_{\text{CMC}}.
\end{align*}
Thus, any $C$-competitive algorithm for online CMC with non-monotone norms implies a $C$-competitive algorithm for chasing unions of two intervals, leading to a contradiction.
\end{proof}

\subsection{Non-Negative Norming Functionals}
Here, we show the existence of non-negative norming functionals for monotone norms.
\label{appendix: prelim}
\begin{proof}[Proof of \cref{lemma: positive dual}]
    Let \[\partial \|x \| = \{z : \|y\| \ge \|x\| + \langle z, y-x \rangle \text{ for all } y\}\] be the set of subgradients at $x$. Since $\|\cdot\|$ is monotone, there exists $z \in \partial \|x\|$ such that $z \ge 0$ for all $x \ge 0$.

    We will now show that $\partial \|x \| \subseteq \{z: \|z\|_* = 1, \langle z, x \rangle = \|x\|\}$.\footnote{In fact, the two sets are equal, but inclusion is enough for us.} Fix any $z \in \partial \|x \|$. We have
    \begin{align*}
    \|0\| \ge \|x\| + \langle z, 0-x \rangle &\implies \langle z, x \rangle \ge \|x\|, \\
    \|2x\| \ge \|x\| + \langle z, 2x-x \rangle &\implies \langle z, x \rangle \le \|x\|.
    \end{align*}
    Thus, $\langle z, x \rangle = \|x\|$. Now, for all $y$, we have
    \[
    \|y\| \ge \|x\| + \langle z, y-x \rangle = \|x\| + \langle z, y \rangle - \langle z, x \rangle = \langle z, y \rangle.
    \]
    Therefore,
    \[
    \|z\|_* = \max_{\|y\| \le 1} \langle z, y \rangle \le  \max_{\|y\| \le 1} \|y\| \le 1.
    \]

    Finally, since $z$ is a maximizer of $\langle z, x \rangle$, it must be that $\|z\|_* = 1$; otherwise, we can scale up $z$. This completes the proof that the set of norming functionals is a superset of subgradients, and we can choose a non-negative subgradient.
\end{proof}

\section{Reduction for Chasing Problems}
\label{appendix:mss}

Here we prove \cref{thm: mss}, the reduction for chasing problems. The algorithm is the same as \textsc{ShadowServe} for online CMC, except that we start from an $\alpha$-competitive $\mc F$-chasing algorithm. The pseudocode is in \cref{alg:shadowserv-mss}.

\begin{algorithm}[h]
\caption{\shadowserv for chasing problems}
\label{alg:shadowserv-mss}
\begin{algorithmic}[1]
\Require An $\mc F$-chasing algorithm $\mc A$ on $(X,d)$, initial location $x_0$
\State Let $y(\tau)$ denote the location of $\mc A$ at time $\tau \in \R_{\ge 0}$, initially $y(0) \gets x_0$
\For{rounds $t=1,2,\dots$}
  \State Observe request multiset $R_t \subseteq \mc F$ and the norm $\|\cdot\|$ on $\mathbb{R}^{R_t}$
  \ForAll{infinitesimal time intervals $[\tau,\tau + \d\tau) \subseteq (t-1,t]$}
    \State Let $d(\tau) \gets (d(y(\tau),F))_{F\in R_t}$
    \State Choose a norming functional $z(\tau)\in \mathbb{R}^{R_t}_{\ge 0}$ s.t. $\|z(\tau)\|_* = 1$ and $\langle z(\tau), d(\tau)\rangle = \|d(\tau)\|$
    \ForAll{requests $F \in R_t$ \textbf{independently}}
      \State Issue an MSS request $F$ to $\mc A$ with probability $z_F(\tau)\, \d\tau$
    \EndFor
  \EndFor
  \State Select $\tau_t \in \arg\min_{\tau\in(t-1,t]} \|d(\tau)\|$
  \State Set $x_t \gets y(\tau_t)$
\EndFor
\end{algorithmic}
\end{algorithm}

\begin{theorem}
    If $\mc A$ is a randomized $\alpha$-competitive algorithm for the $\mc F$-chasing problem against adaptive online adversaries, then $\shadowserv$ is a randomized $O(\alpha)$-competitive algorithm for the batched $\mc F$-chasing problem against adaptive online adversaries. If the norm is weighted $\ell_1$ at all times, then the reduction also works for oblivious adversaries.
\end{theorem}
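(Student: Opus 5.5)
The plan is to mirror the proof of \cref{thm:any-norm}, replacing the $k$-server algorithm by the $\mc F$-chasing algorithm $\mc A$ and the configuration space by $X$. Fix an adaptive online adversary for batched $\mc F$-chasing. It induces an adaptive online adversary for $\mc F$-chasing that issues the shadow requests generated by \cref{alg:shadowserv-mss}. Write $M^{\mc A}$ and $M_t^{\mc A}$ for the movement of $\mc A$ (in total, and within $(t-1,t]$) on the shadow sequence, and $M^*$ for the movement of the induced adversary. The argument then splits into three claims, matching Claims~\ref{claim:alg-mov-server-mov}, \ref{claim:alg-serv-server-mov}, and~\ref{claim:server-opt-facility-opt}.

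\textbf{Step 1 (movement).} In every realization, $\sum_t d(x_{t-1},x_t)=\sum_t d(y(\tau_{t-1}),y(\tau_t))\le M^{\mc A}$. This follows from the triangle inequality applied along the jump times of $\mc A$, exactly as in \cref{claim:alg-mov-server-mov}.

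\textbf{Step 2 (clustering).} Since $\tau_t$ minimizes $\|d(\tau)\|$ over the round, we have $\|d(\tau_t)\|\le\int_{t-1}^t\langle z(\tau),d(\tau)\rangle\,\d\tau$. When a request $F$ arrives at time $\tau$, $\mc A$ must move into $F$ from $y(\tau)$, paying at least $d(y(\tau),F)=d_F(\tau)$. Arrivals occur at rate $z_F(\tau)$, so Fubini gives $\E[\|d(\tau_t)\|]\le\E[M_t^{\mc A}]$. Summing over $t$ yields $\E[\clust]\le\E[M^{\mc A}]$.

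\textbf{Step 3 (offline).} The induced adversary moves to $x_t^*$ at the start of round $t$. Whenever a request $F$ arrives in round $t$, it goes to a nearest point of $F$ and returns, paying $2d(x_t^*,F)$. This is a valid $\mc F$-chasing strategy because it lies in $F$ at the moment of service. Conditioned on $x_t^*$, its expected in-round cost is
\[
2\int_{t-1}^t\E\bigl[\langle z(\tau),d^*_t\rangle \mid x_t^*\bigr]\,\d\tau\le2\|d^*_t\|,
\]
using $\|z(\tau)\|_*=1$ and $z\ge0$. Hence $\E[M^*]\le2\E[\cost^*]$. Combining the three steps with $\E[M^{\mc A}]\le\alpha\E[M^*]+\beta$ gives
\[
\E[\cost]\le2\E[M^{\mc A}]\le4\alpha\,\E[\cost^*]+2\beta.
\]

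\textbf{Oblivious case.} If the norm is $\|v\|=\sum_F w_F|v_F|$ with $w\ge0$, then $z=w$ is a valid nonnegative norming functional for every nonnegative $d(\tau)$. We fix this choice. The shadow request process is then a Poisson process with fixed rates $w_F$ on $(t-1,t]$, which depends only on the (oblivious) input $\sigma$ and not on $\mc A$'s state. The shadow sequence can therefore be sampled in advance, independently of $\mc A$'s coins, and it is an oblivious input to $\mc A$.

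For each realization $\omega$ of the shadow sequence, oblivious competitiveness gives $\E_{\mc A}[M^{\mc A}\mid\omega]\le\alpha M^*(\omega)+\beta$, where $M^*(\omega)$ is the offline optimum on that sequence. The offline strategy of Step 3, built from the fixed offline optimum for $\sigma$, bounds $\E_\omega[M^*(\omega)]\le2\cost^*(\sigma)$. Steps 1 and 2 go through unchanged. Note that Step 2 uses only the arrival rates at the current location of $\mc A$, which remain valid because the Poisson arrivals in $[\tau,\tau+\d\tau)$ are independent of the past.

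\textbf{Main obstacle.} The delicate point is the oblivious case. There one must check that the independence argument in Step 2 still holds when the shadow sequence is fixed in advance. This holds because of the Poisson property: the arrivals in $[\tau,\tau+\d\tau)$ are independent of $\mc A$'s trajectory up to $\tau$. One must also justify averaging the per-realization competitive guarantee over $\omega$, which needs the additive constant $\beta$ to be uniform over sequences. A minor technicality is that $\mc A$ must be run on a countable sequence of requests and serve each one immediately; non-explosiveness of the Poisson process ensures this.
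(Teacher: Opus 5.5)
Your proposal is correct and follows the paper's proof essentially step for step. It uses the same three claims (triangle inequality along $\mc A$'s jump times, the rate-$z_F(\tau)$ lower bound on $\mc A$'s movement via Fubini, and the move-and-return adversary strategy), combined to give $\E[\cost]\le 4\alpha\,\E[\cost^*]+2\beta$. Your oblivious case is the paper's closing remark made slightly more explicit: you fix $z=w$, which matters because the norming functional is not unique when some $d_F(\tau)=0$, and you average the per-realization guarantee using a uniform $\beta$.
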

\begin{proof}
Fix an adaptive online adversary for batched $\mc F$-chasing, and consider an adaptive online adversary for the (non-batched) $\mc F$-chasing problem that generates the request sequence as in our algorithm. Let $M^*$ denote the (random) cost of this adversary. Let $M_t^{\mc A}$ be the movement cost of $\mc A$ during round $t$ against this sequence, and let $M^\mc A$ be the overall movement cost of $\mc A$.

\begin{claim}
\label{claim:alg-mov-mss-mov}
The movement cost of \shadowserv is bounded by the movement cost of the $\mc F$-chasing algorithm: in every realization,
\[
\sum_t d(x_{t-1},x_t) \le \sum_t M_t^{\mc A}.
\]
\end{claim}

\begin{proof}
Fix a realization of the randomness. Let $\tau^{(1)}, \tau^{(2)}, \ldots$ be the sequence of times at which $\mc A$ moves, and let $\tau^{(0)}=0$. Then
\begin{align*}
\sum_t d(x_{t-1},x_t)
&= \sum_t d\bigl(y(\tau_{t-1}),y(\tau_t)\bigr) \\
&\le \sum_t \sum_{\tau^{(i)}\in (\tau_{t-1},\tau_t]} d\bigl(y(\tau^{(i-1)}),y(\tau^{(i)})\bigr) \\
&\le \sum_t M_t^{\mc A}. \qedhere
\end{align*}
\end{proof}

\begin{claim}
\label{claim:alg-serv-mss-mov}
The expected service cost is bounded by the expected movement cost of the $\mc F$-chasing algorithm: for all $t$, the corresponding optimal service time $\tau_t$ satisfies
\[
\E[\|d(\tau_t)\|] \le \E[M_t^{\mc A}].
\]
\end{claim}

\begin{proof}
The claim follows from the following inequalities:
\begin{align}
\E[\|d(\tau_t)\|]
&\le \E\left[\int_{t-1}^{t} \|d(\tau)\| \, \d\tau\right] \label{eq:mss-1} \\
&= \E\left[\int_{t-1}^{t} |\langle z(\tau), d(\tau)\rangle| \, \d\tau \right] \label{eq:mss-2} \\
&= \int_{t-1}^{t} \E\bigl[\langle z(\tau), (d(y(\tau),F))_{F\in R_t}\rangle\bigr] \, \d\tau \label{eq:mss-3} \\
&\le \E[M_t^{\mc A}] \label{eq:mss-4},
\end{align}
where inequality~\eqref{eq:mss-1} holds because we choose $\tau_t$ with the smallest service cost, equality~\eqref{eq:mss-2} follows from the definition of $z$, and equality~\eqref{eq:mss-3} follows from Fubini's theorem and the non-negativity of both $z(\tau)$ and $d(\tau)$. To see inequality~\eqref{eq:mss-4}, at time $\tau$, each request $F$ appears with rate $z_F(\tau)$, in which case $\mc A$ must incur a movement cost of at least $d(y(\tau),F)$ to serve it. Thus, in expectation, the instantaneous cost of $\mc A$ is at least $z_F(\tau)\, d(y(\tau),F)$ for each request $F$, and inequality (4) follows from integrating.
\end{proof}

\begin{claim}
\label{claim:mss-opt-bmss-opt}
The expected cost of the adaptive online adversary for $\mc F$-chasing is bounded by the cost of the adaptive online adversary for batched $\mc F$-chasing:
\[
\E[M^*] \le 2\E[\cost^*].
\]
\end{claim}

\begin{proof}
Let $x_t^*$ denote the (random) location of an adaptive online adversary for batched $\mc F$-chasing in round $t$, and let
\[
d_t^* := (d(x_t^*,F))_{F\in R_t}
\]
be the corresponding distance vector. Consider the following service strategy of the adaptive online adversary for $\mc F$-chasing: at round $t$, first move to $x_t^*$, and then whenever a request during round $t$ appears, move to the request set and then return to $x_t^*$. The expected cost of this strategy (conditioned on $d_t^*$) during round $t$ is
\begin{align*}
2 \int_{t-1}^{t} \E[\langle z(\tau), d_t^*\rangle \mid d_t^*] \, \d\tau
&\le 2 \int_{t-1}^{t} \left(\max_{\|z\|_*=1} |\langle z, d_t^* \rangle|\right)\,\d\tau \\
&= 2 \int_{t-1}^{t} \|d_t^*\| \, \d\tau
\end{align*}
where the inequality follows from $\|z(\tau)\|_*=1$. Further, this strategy moves between rounds exactly as $\OPT$ does. Thus, the total cost of this strategy is upper bounded by $2\cdot \cost^*$.
\end{proof}

Combining Claims \ref{claim:alg-mov-mss-mov}, \ref{claim:alg-serv-mss-mov}, and \ref{claim:mss-opt-bmss-opt} with the fact that $\mc A$ is $\alpha$-competitive against adaptive online adversaries, we obtain
\begin{align*}
    \cost=\sum_t \bigl(\E[d(x_{t-1},x_t)] + \E[\|d(\tau_t)\|]\bigr)
&\le \sum_t 2\E[M_t^{\mc A}]
\\&\le 2\alpha\cdot \E[M^*] + 2\beta
\le 4\alpha \cdot \E[\cost^*] + 2\beta.
\end{align*}

Finally, when the norms are weighted $\ell_1$ at every time, the norming functional $z$ is constant on each interval and does not depend on the current configuration of $\mc A$. Therefore, once the batched $\mc F$-chasing instance is fixed, the induced distribution over $\mc F$-chasing requests is fixed as well, and the reduction also applies when $\mc A$ is competitive only against oblivious adversaries.
\end{proof}

\subsection{Derandomizing Convex Body Chasing}
\label{appendix: derandomize}
Here we prove \cref{cor:convex body} and give a deterministic $O(d)$-competitive algorithm for batched convex body chasing. To do so, we show the following derandomization lemma.

\begin{lemma}
    If there is an $\alpha$-competitive randomized algorithm for batched convex body chasing against an oblivious adversary, then there is an $\alpha$-competitive deterministic algorithm for batched convex body chasing.
\end{lemma}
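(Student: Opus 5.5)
The approach is to replace the random location by its expectation. Let $\mc R$ be the randomized $\alpha$-competitive algorithm against oblivious adversaries. The deterministic algorithm $\mc D$ simulates the distribution of $\mc R$ on the actual input, which is possible because $\mc D$ itself is deterministic. After round $t$ it outputs the mean location
\[
\bar x_t := \E[x_t^{\mc R}],
\]
where the expectation is over $\mc R$'s internal randomness, conditioned on the request sequence $R_1,\dots,R_t$ seen so far. The input is deterministic from the algorithm's point of view, so the same instance is a valid oblivious input to $\mc R$. It remains to compare the two costs round by round.

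The costs are compared using convexity in two places.

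\emph{Movement.} The metric on $\R^d$ comes from a norm, so $x\mapsto d(x,y)=\|x-y\|$ is jointly convex. Jensen's inequality then gives
\[
d(\bar x_{t-1},\bar x_t)=\bigl\|\E[x_{t-1}^{\mc R}-x_t^{\mc R}]\bigr\|\le \E\bigl[d(x_{t-1}^{\mc R},x_t^{\mc R})\bigr].
\]

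\emph{Service.} For a convex set $F$, the map $x\mapsto d(x,F)$ is convex. To see this, take near-minimizers $p,q\in F$ for $x$ and $y$; the point $\lambda p+(1-\lambda)q$ lies in $F$ and certifies the bound at $\lambda x+(1-\lambda)y$. The outer norm $\|\cdot\|$ on $\R^{R_t}$ is monotone. A monotone norm composed coordinatewise with nonnegative convex functions is convex: apply coordinatewise convexity, then monotonicity, then the triangle inequality and homogeneity. Jensen's inequality therefore gives
\[
\bigl\|(d(\bar x_t,F))_{F\in R_t}\bigr\|\le \E\bigl\|(d(x_t^{\mc R},F))_{F\in R_t}\bigr\|.
\]

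Summing over rounds, $\cost(\mc D)\le \E[\cost(\mc R)]\le \alpha\,\cost^*+\beta$. Against deterministic algorithms the adversary types coincide, so $\mc D$ is $\alpha$-competitive.

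The main subtlety is that the expectation must be well defined and the simulation legitimate. If $\mc R$ has finite expected cost, locations are integrable, since $\|x_t^{\mc R}-x_0\|$ is at most the cumulative movement. Also, the adversary facing $\mc D$ may be adaptive, but since $\mc D$ is deterministic, any such adversary is equivalent to fixing the sequence in advance, which is exactly an oblivious input for $\mc R$. The bound on $\E[\cost(\mc R)]$ uses the same additive constant $\beta$.

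Applying the lemma to the randomized $O(d)$-competitive algorithm for batched convex body chasing yields \cref{cor:convex body}. That algorithm comes from \cref{thm: mss} with the deterministic $O(d)$ convex body chasing algorithm, and it is competitive against adaptive online adversaries, hence also against oblivious ones.
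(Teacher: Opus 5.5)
Your proposal is correct and follows essentially the same approach as the paper's proof. You output the mean position $\E[X_t]$, bound the movement cost by convexity of the norm, and bound the service cost by convexity of $x\mapsto d(x,F)$ together with monotonicity and convexity of the batch norm; your remarks on integrability and on why a deterministic algorithm effectively faces an oblivious input fill in details the paper leaves implicit.
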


\begin{proof}
Let $X_t \in \R^d$ be the random position of a randomized algorithm in round $t$, and define the deterministic position $y_t := \E[X_t]$. We show that the deterministic algorithm that outputs $y_t$ in each round has cost no larger than the expected cost of the randomized one.

For the movement cost, by convexity of the norm,
\[
\|y_t-y_{t-1}\|
= \|\E[X_t-X_{t-1}]\|
\le \E[\|X_t-X_{t-1}\|].
\]

Now fix a convex body $F \subseteq \R^d$. The function $x \mapsto d(x,F)$ is convex, so by Jensen's inequality,
\[
d(y_t,F)=d(\E[X_t],F)\le \E[d(X_t,F)].
\]
Hence, for the distance vector to the batch $R_t$, we have
\[
(d(y_t,F))_{F\in R_t} \le \E[(d(X_t,F))_{F\in R_t}]
\]
coordinatewise. Since the batch norm is monotone and convex,
\[
\|(d(y_t,F))_{F\in R_t}\|
\le \left\|\E[(d(X_t,F))_{F\in R_t}]\right\|
\le \E\!\left[\|(d(X_t,F))_{F\in R_t}\|\right].
\]

Therefore, in every round $t$,
\[
\|y_t-y_{t-1}\| + \|(d(y_t,F))_{F\in R_t}\|
\le
\E\!\left[\|X_t-X_{t-1}\| + \|(d(X_t,F))_{F\in R_t}\|\right].
\]
Summing over all rounds proves that the deterministic algorithm $(y_t)_t$ has total cost at most the expected cost of the randomized algorithm.
\end{proof}

\cref{cor:convex body} then follows from applying \cref{thm: mss} to the convex body chasing problem and using the lemma.

\section{Discretized Implementation of Line Algorithms}
\label{appendix:discretize}

We describe a discrete implementation of (the more complex) \radAggrDC{} with only a
constant-factor loss in the competitive guarantee, and then give the
analogous implementation of \AggrDC. At a very high level, we split the interval into small enough segments and run the algorithm at discrete time steps. The segments are so short relative to the current norm that the error due to discretization can be absorbed.

\paragraph{The Algorithm.}
Initialize the internal configuration at $C_0$ and retain the
left-to-right indices of the centers. Break nearest-center ties in
favor of the smallest index. Use fixed deterministic rules for norming
functionals and all other ties.
For a configuration $S$ in round $t$, write $\rad(S)$ for its
nearest-center radius vector, with radius zero for an empty cluster.

Fix a nonempty round $t$. We split the interval $(t-1, t]$ into $N_t$ segments of length $\varepsilon_t$, where
\[
N_t:=\lceil 8k\|\mathbf 1\|\rceil,\qquad
\varepsilon_t:=1/N_t.
\]
Set $S^0$ to the current internal configuration. Let $S^\ell$ be the configuration at the beginning of
iteration $\ell$, for $\ell=0,\ldots,N_t-1$.
By \Cref{lemma: positive dual}, choose a nonnegative norming functional
$z^\ell$ for $\rad(S^\ell)$ satisfying
\[
\|z^\ell\|_*=1,\qquad
\langle z^\ell,\rad(S^\ell)\rangle=\|\rad(S^\ell)\|.
\]
For each nonempty cluster $c$ of $S^\ell$, choose a farthest request
$r_c^\ell$. Freeze these requests and $z^\ell$ during the iteration,
and process the nonempty starting clusters in center order.

For each $c$, run single-request Double Coverage toward $r_c^\ell$ for
$\varepsilon_t z_c^\ell$ units of time, with speed equal to the current
nearest-center distance to $r_c^\ell$. Select the rightmost indexed center
strictly to its left and the leftmost indexed center strictly to its
right, when they exist. The distance decays exponentially, so this is
an explicit update: each selected center moves by
$1-e^{-\varepsilon_t z_c^\ell}$ times the distance from $r_c^\ell$ to the
configuration before its motion. A request already occupied induces
no movement. Since $1-e^{-\varepsilon_t z^\ell_c} \le \varepsilon_tz^\ell_c \le \varepsilon_t \|\mathbf 1\| < 1$, no center passes the request, and the center order is
preserved.
Let $S^{\ell+1}$ be the configuration after iteration $\ell$.

At the end of the round, output the configuration with the smallest clustering cost, $C_t=S^{\ell^*}$, where $\ell^*\in\arg\min_{0\le \ell<N_t}\|\rad(S^\ell)\|$. The internal configuration is not rewound: the next round starts from $S^{N_t}$. We now show that the asymptotic competitive ratio is the same. The constant $32$ is not optimized.

\begin{theorem}
\label{thm:discrete-radaggrdc}
The discrete algorithm is $32k$-competitive.
\end{theorem}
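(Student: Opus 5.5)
The plan is to rerun the continuous potential argument of \Cref{thm:radaggrdc} one iteration at a time. I use the same potential $\Phi=2(k\cdot\Match(S,C^*_t)+\Sigma(S))$. Offline moves only occur between rounds, and as before they increase $\Phi$ by at most $2k\,\mov^*$. The aim is a per-iteration inequality of the form
\[
\Delta\Phi^\ell+\Match(S^\ell,S^{\ell+1})+\varepsilon_t\|\rad(S^{\ell})\|\le 16k\,\varepsilon_t\|\rad^*\|+\text{(error)} .
\]
The error must be absorbed into a constant fraction of the left-hand side. After that, summing over $\ell$ bounds the total movement plus $\varepsilon_t\sum_\ell\|\rad(S^\ell)\|$, and the latter is at least $\|\rad(S^{\ell^*})\|=\clust_t$. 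The output movement is bounded by the internal movement via the triangle inequality, as in \cref{claim:alg-mov-server-mov}. Keeping an extra factor $2$ slack gives $32k$.

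Within iteration $\ell$, the requests $r_c^\ell$ and the weights $z_c^\ell$ are frozen, and the clusters are processed in order. For a single Double Coverage step toward $r_c^\ell$, I would use the exact discrete version of \cref{claim:aggrdc-local}. Along the exponential trajectory the request is never passed, the ordering is preserved, and the case analysis (one or two servers, offline server between or not) holds pointwise along the motion. Integrating the pointwise inequality over the $\varepsilon_t z_c^\ell$ units of time gives
\[
\Delta\Phi_c+\mathrm{mov}_c+\int d(r_c^\ell,S)\,\d s\le 4k\,\varepsilon_t z^\ell_c\, d(r_c^\ell,C^*_t),
\]
and likewise $\Delta\Phi_c+\mathrm{mov}_c+\int d\le 4k\int d$. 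Hence the amortized charge of $c$ is at most $4k\,\varepsilon_t z^\ell_c\min\{\rho_c,\rad^*_{\chi(c)}\}$, where $\rho_c$ is the distance at the start of $c$'s motion.

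Two discrepancies with the continuous analysis must be controlled.

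First, $\int d(r_c^\ell,S)\,\d s$ equals $\rho_c(1-e^{-\varepsilon_t z_c})/1\ge \rho_c\varepsilon_t z_c(1-\varepsilon_t z_c/2)$, where $\rho_c$ is measured after earlier clusters in the same iteration have moved. So $\rho_c$ may differ from $\rad_c(S^\ell)$. Earlier motions in the iteration shift each center by at most $\varepsilon_t\sum_{c'}z_{c'}\rad_{c'}\le\varepsilon_t\|\mathbf 1\|\,\|\rad(S^\ell)\|$. This uses $z_{c'}\le\|z\|_*\|e_{c'}\|$ together with monotonicity, which gives $\langle z,\mathbf 1\rangle\le\|\mathbf 1\|$. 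I then use $\langle z^\ell,\rad(S^\ell)\rangle=\|\rad(S^\ell)\|$ to conclude that the total deficit $\sum_c\varepsilon_t z_c|\rad_c-\rho_c|$ is at most $k\varepsilon_t^2\|\mathbf 1\|\,\|\rad(S^\ell)\|\le\frac18\varepsilon_t\|\rad(S^\ell)\|$, by the choice $N_t\ge 8k\|\mathbf 1\|$.

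Second, the charging geometry of \cref{claim:radaggrdc-local} and the aggregation \cref{lem:aggregation} are applied to the frozen configuration $S^\ell$, with the charges $4k\,\varepsilon_t z^\ell_c\min\{\rad_c(S^\ell),\rad^*_{\chi(c)}\}$. The same drift bound shows these differ from the true charges by at most another $\frac18\varepsilon_t\|\rad(S^\ell)\|$. Then $\langle z^\ell,s\rangle\le\|s\|\le16k\|\rad^*\|$ finishes the per-iteration bound, with both errors absorbed into half of the clustering term.

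The main obstacle is the first discrepancy: the within-iteration drift of the farthest distances. The sequential processing means cluster $c$ sees a configuration already modified by earlier clusters. A center shared between two clusters' Double Coverage moves can even move away from $r_c^\ell$. The bound must therefore be stated carefully: each single step changes all distances by at most the movement made, and the total movement in an iteration is $O(\varepsilon_t\|\mathbf 1\|\,\|\rad(S^\ell)\|)$. The short segment length is what makes this error second order. The analogous argument for \AggrDC is the same, with distances in place of radii and no aggregation step.
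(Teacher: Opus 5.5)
Your architecture matches the paper's: the same potential, \cref{claim:aggrdc-local} integrated over each frozen single-request motion, and the interval geometry of \cref{claim:radaggrdc-local} with \cref{lem:aggregation} applied at $S^\ell$. The error bookkeeping, however, fails at your second discrepancy. The charges carry the coefficient $4k$. Replacing $4k\varepsilon_t z^\ell_c\min\{\rho_c,\rad^*_{\chi(c)}\}$ by $4k\varepsilon_t z^\ell_c\min\{\rad_c(S^\ell),\rad^*_{\chi(c)}\}$ therefore costs up to $4k\sum_c\varepsilon_t z^\ell_c\max\{0,\rho_c-\rad_c(S^\ell)\}$. That is up to $4k$ times your first deficit, not ``another $\frac18\varepsilon_t\|\rad(S^\ell)\|$''. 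Your own deficit bound $k\varepsilon_t^2\|\mathbf 1\|\,\|\rad(S^\ell)\|$ only yields $\frac k2\varepsilon_t\|\rad(S^\ell)\|$, which exceeds the whole clustering term once $k\ge 3$. So your per-iteration inequality, and hence $32k$, does not follow from the bounds you state. Using $\sum_c z^\ell_c=\langle z^\ell,\mathbf 1\rangle\le\|\mathbf 1\|$ instead of $k\max_c z^\ell_c$ improves this to about $4k\varepsilon_t\|\mathbf 1\|\cdot\varepsilon_t\|\rad(S^\ell)\|\le\frac12\varepsilon_t\|\rad(S^\ell)\|$. Added to the first discrepancy, that still overshoots your budget of half the clustering term. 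Two smaller points. First, $\varepsilon_t\sum_{c'}z^\ell_{c'}\rad_{c'}(S^\ell)$ equals $\varepsilon_t\|\rad(S^\ell)\|$, so bounding it by $\varepsilon_t\|\mathbf 1\|\,\|\rad(S^\ell)\|$ is false when $\|\mathbf 1\|<1$. Second, your drift bound must account for the fact that the earlier $\rho_{c'}$ are themselves perturbed.

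The paper charges this error to movement instead of clustering. Every distance to $r^\ell_c$ seen during iteration $\ell$ is within the total movement $M^\ell$ of $\rad_c(S^\ell)$. So, per unit of $\varepsilon_t z^\ell_c$, the charges are at most $4k(\rad_c(S^\ell)+M^\ell)$. Subtracting $4kM^\ell$ leaves truncated charges to which the frozen-configuration geometry and aggregation apply. The remainder is $4k\varepsilon_t M^\ell\langle z^\ell,\mathbf 1\rangle\le 4k\varepsilon_t\|\mathbf 1\|M^\ell\le\frac12 M^\ell$. This keeps the full distance integral $D^\ell$ on the left. The crude comparison $\varepsilon_t\|\rad(S^\ell)\|\le D^\ell+\varepsilon_t\|\mathbf 1\|M^\ell\le 2D^\ell$, together with doubling the potential, then gives exactly $32k$. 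To do this you must keep the actual movement $M^\ell$ on the left, not $\Match(S^\ell,S^{\ell+1})$.

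Your clustering-absorption route can also be saved, but only with a tighter count that you do not give:
\begin{itemize}
\item Positive drift also inflates the distance integrals, so its net cost is $(4k-1)$ times the drift rather than $4k$ times.
\item Positive and negative drift share one budget, $\sum_c\varepsilon_t z^\ell_c\sum_{c'<c}D^\ell_{c'}$, since drift comes only from earlier motions. Here $D^\ell_{c'}$ is the distance integral of motion $c'$, which also bounds how far any center moves in it.
\item The factor $\sum_{c>c'}\varepsilon_t z^\ell_c\le\frac1{8k}-\varepsilon_t z^\ell_{c'}$ cancels the exponential-decay loss.
\end{itemize}
Together these bound the total loss by $\frac{4k-1}{8k-1}\varepsilon_t\|\rad(S^\ell)\|\le\frac12\varepsilon_t\|\rad(S^\ell)\|$, with essentially no slack.
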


\begin{proof}
Fix an offline solution. We use the potential from the proof of
\Cref{thm:radaggrdc},
\[
\Phi(S,C^*):=2\bigl(k\Match(S,C^*)+\Sigma(S)\bigr).
\]
As there, offline movement by a distance $L$ increases $\Phi$ by at most
$2kL$.

Fix a round $t$ and suppress only its index, writing
$\varepsilon:=\varepsilon_t$ and $N:=N_t$.
Write $C^*=C_t^*$ and let $\rad^*$ be its radius vector, using its
chosen assignment of requests. For each iteration $\ell$ and processed
center $c$, let $D_c^\ell$ be the integral of the nearest-center distance
to its frozen request $r_c^\ell$ during its motion, $M_c^\ell$ the movement
incurred, and $\Delta_c^\ell\Phi$ the change in potential. Write
$D^\ell:=\sum_c D_c^\ell$ and $M^\ell:=\sum_c M_c^\ell$.
The distance integral $D_c^\ell$ equals the distance traveled by each
selected center, so $M^\ell\le2D^\ell$.

\paragraph{Single-Request Charges.}
Integrating \Cref{claim:aggrdc-local} directly over the motion gives
\begin{equation*}
\Delta_c^\ell\Phi+M_c^\ell+D_c^\ell
\le4k\varepsilon z_c^\ell\,d(r_c^\ell,C^*).
\label{eq:discrete-rad-local}
\end{equation*}
Also, the one- and two-center calculations in its proof give
$\Delta_c^\ell\Phi+M_c^\ell+D_c^\ell\le4kD_c^\ell$.

These single-request bounds incur no discretization loss.
The discrepancy is that $z^\ell$ is a norming functional for
$\rad(S^\ell)$ and the witnesses $r_c^\ell$ are chosen at $S^\ell$, whereas
$D_c^\ell$ uses distances to configurations reached during iteration $\ell$.
For each frozen request $r_c^\ell$, these distances differ from
$d(r_c^\ell,S^\ell)=\rad_c(S^\ell)$ by at most $M^\ell$.
This movement error enters both the aggregation of charges and the
comparison with the clustering cost at $S^\ell$.

\paragraph{Aggregating the Charges.}
For each nonempty cluster of $S^\ell$, let $\chi^\ell(c)$ be the offline
center assigned to $r_c^\ell$, extending $\chi^\ell$ arbitrarily on empty
clusters. Define the amortized charge per unit time by
\[
s_c^\ell:=\frac{\max\{0,\Delta_c^\ell\Phi+M_c^\ell+D_c^\ell\}}{\varepsilon z_c^\ell}
\]
for $z_c^\ell>0$, setting all charges and costs to zero for skipped
clusters or zero-duration motions. The preceding distance comparison
gives $D_c^\ell\le\varepsilon z_c^\ell(\rad_c(S^\ell)+M^\ell)$, and hence
\[
s_c^\ell\le4k(\rad_c(S^\ell)+M^\ell),\qquad
s_c^\ell\le4k\rad^*_{\chi^\ell(c)}.
\]
To apply the starting configuration's interval geometry, remove the
movement error by putting $b_c^\ell:=\max\{0,s_c^\ell-4kM^\ell\}$. Then
\[
b_c^\ell\le4k\rad_c(S^\ell),\qquad b_c^\ell\le4k\rad^*_{\chi^\ell(c)}.
\]

The geometric estimate in the proof of \cref{claim:radaggrdc-local},
applied at $S^\ell$, gives
$\sum_{c\in(\chi^\ell)^{-1}(j)}b_c^\ell\le16k\rad_j^*$ for each offline
center $j$. By \Cref{lem:aggregation} and monotonicity,
$\|b^\ell\|\le16k\|\rad^*\|$.
Since $s^\ell\le b^\ell+4kM^\ell\mathbf 1_k$, it follows that
\[
\|s^\ell\|\le16k\|\rad^*\|+4kM^\ell\|\mathbf 1_k\|.
\]
Let $\Delta^\ell\Phi:=\Phi(S^{\ell+1},C^*)-\Phi(S^\ell,C^*)$.
Summing the submoves and using $\|z^\ell\|_*=1$ gives
\begin{align*}
\Delta^\ell\Phi+M^\ell+D^\ell
&\le\varepsilon\langle z^\ell,s^\ell\rangle\le16k\varepsilon\|\rad^*\|
   +4k\varepsilon M^\ell\|\mathbf 1_k\|.
\end{align*}
Since $\varepsilon\le1/(8k\|\mathbf 1\|)$, this error term is at most
$M^\ell/2$, so
\begin{equation}
\Delta^\ell\Phi+\tfrac12M^\ell+D^\ell
\le16k\varepsilon\|\rad^*\|.
\label{eq:discrete-rad-iteration}
\end{equation}

\paragraph{Clustering and Round Cost.}
The norming identity for $z^\ell$ is exact at $S^\ell$.
Replacing the starting radii by the distance
integrals $D_c^\ell$ incurs some error, which is bounded by $M^\ell$. Integrating the same fixed-request distance
comparison over each motion, and using $M^\ell\le2D^\ell$, gives
\[
\varepsilon\|\rad(S^\ell)\|
=\varepsilon\langle z^\ell,\rad(S^\ell)\rangle
\le D^\ell+\varepsilon M^\ell\|\mathbf 1_k\|
\le(1+2\varepsilon\|\mathbf 1_k\|)D^\ell
\le2D^\ell.
\]
Thus this comparison loses at most a factor of two. Setting
$\Psi:=2\Phi$ also restores the full movement coefficient after the
earlier error absorption. Combining with
\eqref{eq:discrete-rad-iteration} gives
\[
\Delta^\ell\Psi+M^\ell+\varepsilon\|\rad(S^\ell)\|
\le32k\varepsilon\|\rad^*\|.
\]
Let $M_t:=\sum_{\ell=0}^{N-1}M^\ell$ be the total internal movement in round $t$ and
$\Delta_t\Psi$ its potential change with $C_t^*$ held fixed.
Since the algorithm selects the best iteration-start configuration,
\[
\|\rad(C_t)\|
\le\varepsilon\sum_{\ell=0}^{N-1}\|\rad(S^\ell)\|.
\]
Summing the iteration inequalities therefore gives
\[
\Delta_t\Psi+M_t+\|\rad(C_t)\|
\le32k\|\rad^*\|.
\]
The configurations output in successive rounds occur in chronological
order along the internal movement path, so their total movement is
at most $\sum_t M_t$.
Finally, offline movement increases $\Psi$ by at most $4k$ times its
movement cost. Summing over rounds, using $\Psi\ge0$, proves
\[
\cost\le32k\,\clust^*+4k\,\mov^*+\Psi_0
\le32k\,\cost^*+\Psi_0,
\]
where $\Psi_0$ is the potential of the fixed initial configurations,
independent of the request sequence.
\end{proof}

For \AggrDC, process all requests in $R_t$ in any fixed order instead
of one farthest request per cluster, and choose $z^\ell$ as a norming
functional for the starting distance vector. Take
$N_t=\lceil2\|\mathbf 1_{R_t}\|\rceil$ and
$\varepsilon_t=1/N_t$. With $M^\ell$ and $D^\ell$ defined analogously,
integrating \Cref{claim:aggrdc-local} and summing directly gives
\[
\Delta^\ell\Phi+M^\ell+D^\ell
\le4k\varepsilon_t\langle z^\ell,d(C_t^*)\rangle
\le4k\varepsilon_t\|d(C_t^*)\|.
\]
The same clustering comparison gives
$\varepsilon_t\|d(S^\ell)\|\le2D^\ell$.
Doubling the potential and selecting the best iteration-start
configuration therefore proves that this implementation is
$8k$-competitive.

\section{Associations for Norm-of-Radii Objective}
\label{appendix: association}
Recall that in the norm-of-radii objective, the clustering cost is determined not only by the location of the centers, but also by the associations between requests and the centers. We first show that on the line, greedy assignment is optimal up to a constant factor.

\begin{theorem}\label{thm:nn-constant-factor-line}
Let $c_1<\cdots<c_k$ be fixed centers on the line and let $R$ be a multiset of request points.
For an assignment $\pi:R\to[k]$, let $\rad(\pi)\in\R_{\ge 0}^k$ be the radius vector, where
\[
\rad_i(\pi):=\max\{\,|x-c_i|:\ x\in R,\ \pi(x)=i\,\},
\]
with $\rad_i(\pi)=0$ if no request is assigned to $c_i$.
Let $\|\cdot\|$ be any monotone symmetric norm on $\R^k$.
Let $\pi^{\mathrm{nn}}$ be the nearest-center assignment (ties broken arbitrarily).
Then $\|\rad(\pi^{\mathrm{nn}})\|\le O(1)\cdot \min_{\pi}\|\rad(\pi)\|$.
\end{theorem}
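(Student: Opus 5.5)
Fix an optimal assignment $\pi^*$ with radii $\rho:=\rad(\pi^*)$, and write $\rho^{\mathrm{nn}}:=\rad(\pi^{\mathrm{nn}})$. The plan is to reuse the charging-plus-aggregation template from the proof of \Cref{thm:radaggrdc}. Concretely, I will build a map $\chi:[k]\to[k]$ with three properties: (i) each nearest-neighbour radius $\rho^{\mathrm{nn}}_i$ is bounded by a constant times $\rho_{\chi(i)}$; (ii) the charges landing on each fixed index $j$ sum to $O(\rho_j)$; (iii) the comparison is closed by \Cref{lem:aggregation} and monotonicity. Both norms live on $\R^k$, indexed by the same centers, so \Cref{lem:aggregation} applies directly with $m=k$.

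First the charging. For each center $i$ with a nonempty nearest-neighbour cluster, let $r_i$ be a farthest request of that cluster, so $\rho^{\mathrm{nn}}_i=|r_i-c_i|$, and set $\chi(i):=\pi^*(r_i)$. Since $c_i$ is the nearest center to $r_i$,
\[
\rho^{\mathrm{nn}}_i=d(r_i,\{c_1,\dots,c_k\})\le |r_i-c_{\chi(i)}|\le \rho_{\chi(i)}.
\]
This gives (i) with constant $1$. It remains to control the overlap, i.e.\ to show that for every $j$
\[
\sum_{i\in\chi^{-1}(j)}\rho^{\mathrm{nn}}_i\le O(1)\cdot \rho_j.
\]
Given this, define $y_j:=\sum_{i\in\chi^{-1}(j)}\rho^{\mathrm{nn}}_i$. \Cref{lem:aggregation} gives $\|\rho^{\mathrm{nn}}\|\le\|y\|$, and monotonicity together with $y\le O(1)\rho$ gives $\|y\|\le O(1)\|\rho\|$, which is the theorem.

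The overlap bound is the main step and uses interval geometry on the line, as in \Cref{claim:radaggrdc-local}. Fix $j$ and let $I=[c_j-\rho_j,\,c_j+\rho_j]$, so that every $r_i$ with $i\in\chi^{-1}(j)$ lies in $I$ and $|I|=2\rho_j$. The centers $c_i$ with $i\in\chi^{-1}(j)$ split into two kinds.

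\emph{Centers inside $I$.} Here the segment $[c_i,r_i]$ (or $[r_i,c_i]$) lies inside $I$. Because $c_i$ is nearest to $r_i$, no other center lies strictly inside that segment. Distinct centers then give segments with disjoint interiors. The one case to check is two centers on the same side of their requests with nested segments, which nearest-ness forbids. Summing lengths gives at most $|I|$.

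\emph{Centers outside $I$.} At most one such center on each side can be the nearest center of a point in $I$, namely the closest center to $I$ from the left and from the right. Each of these contributes at most $\rho_{j}$ by (i).

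Adding the two kinds gives a total of at most $2\rho_j+2\rho_j=4\rho_j$. The whole theorem thus holds with constant $4$.

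One degenerate case needs care: several centers may coincide with the same point in a multiset configuration. Since ties are broken arbitrarily, only one coincident center receives the requests, and the others have empty clusters and contribute $0$. The only step I expect to need careful checking is the disjoint-interiors claim. The argument for it: suppose $c_i<c_{i'}$ both lie in $I$ and their segments overlap in an interval of positive length. Then one center lies strictly between the other center and that other center's farthest request, contradicting nearest-ness up to ties, and ties can be resolved consistently.
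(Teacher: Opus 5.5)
Your argument is correct. It differs from the paper's proof of this theorem, but it is exactly the charging-plus-aggregation template the paper uses in the main text for \Cref{claim:radaggrdc-local} and \Cref{thm:radaggrdc}.

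The paper's appendix proof works as follows:
\begin{itemize}
\item It splits each radius into left and right radii $a_i,b_i$ and passes, losing a constant factor, to the surrogate objective $\|a\|+\|b\|$.
\item An exchange argument (moving mass onto a zero-valued coordinate, which is where symmetry enters) shows that a minimizer of the surrogate has no cluster containing another center, so it is described by boundary points $p_i$.
\item It then shifts each $p_i$ to the midpoint $m_i$, bounding the increase by $\max\{b_i^*,a_{i+1}^*\}$. This gives the constant $12$, plus structural information about good assignments.
\end{itemize}
You instead charge each nearest-neighbour radius to the $\pi^*$-center of its farthest request, control the overlap at each $j$ by interval geometry, and invoke \Cref{lem:aggregation} once. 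This is shorter, gives constant $4$, and never uses optimality of $\pi^*$: it proves $\|\rad(\pi^{\mathrm{nn}})\|\le 4\|\rad(\pi)\|$ for every assignment $\pi$.

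One step of your justification needs repair. In the disjoint-interiors claim, the facing configuration $c_i<r_{i'}<r_i<c_{i'}$ is not excluded by ``one center lies strictly between the other center and its farthest request,'' since neither center does. It is excluded by the midpoint: nearest-ness of $c_i$ for $r_i$ and of $c_{i'}$ for $r_{i'}$ gives
\[
r_i\le \tfrac{c_i+c_{i'}}{2}\le r_{i'}.
\]
Your ``strictly between'' argument correctly handles the same-orientation overlaps. So it is the facing case, not the same-orientation one, that needs the extra check.

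Finally, your multiset remark is unnecessary, since the statement assumes $c_1<\cdots<c_k$.
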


\begin{proof}
It helps to think that the left and right radius vectors are paid separately. For an assignment $\pi$, define the \emph{left} and \emph{right} radii vectors $a(\pi),b(\pi)\in\R_{\ge 0}^k$ by
\begin{align*}
a_i(\pi)&:=\max\{\,c_i-x:\ x\in R,\ \pi(x)=i,\ x\le c_i\,\},\\
b_i(\pi)&:=\max\{\,x-c_i:\ x\in R,\ \pi(x)=i,\ x\ge c_i\,\},
\end{align*}
with the convention that the max over an empty set is $0$.
Then $\rad_i(\pi)=\max\{a_i(\pi),b_i(\pi)\}$ and $a_i(\pi)+b_i(\pi)$ is the diameter of the assigned interval around $c_i$.

Coordinatewise we have
\[
\frac{a_i(\pi)+b_i(\pi)}{2}\ \le\ \max\{a_i(\pi),b_i(\pi)\}\ \le\ a_i(\pi)+b_i(\pi).
\]
By monotonicity of $\|\cdot\|$, this implies
\begin{equation*}\label{eq:rad-vs-sum}
\frac12\|a(\pi)+b(\pi)\|\ \le\ \|\rad(\pi)\|\ \le\ \|a(\pi)+b(\pi)\|.
\end{equation*}
Moreover, by the triangle inequality and monotonicity,
\begin{equation*}\label{eq:sum-vs-split}
\|a(\pi)+b(\pi)\|\ \le\ \|a(\pi)\|+\|b(\pi)\|
\ \le\ 2\,\|a(\pi)+b(\pi)\|.
\end{equation*}
Thus, up to a constant factor, it suffices to analyze the objective $\|a(\pi)\|+\|b(\pi)\|$. Let $\pi^*$ minimize $\|a(\pi)\|+\|b(\pi)\|$, and write $a^*:=a(\pi^*)$, $b^*:=b(\pi^*)$.

\begin{claim}
   In $\pi^*$, no cluster contains another center. Formally, for every $i \ne j$, \[c_j \notin [c_i - a_i(\pi^*), c_i + b_i(\pi^*)].\]
\end{claim}
\begin{proof}
    Suppose for the purpose of contradiction that some center $c_i$ includes another center $c_{j}$ to its right. It should be easy to see that $b_j = 0$.

    Let $x$ be the rightmost request currently assigned to $c_i$ that lies to the right of $c_j$. Consider an alternative assignment which reassigns all requests in $(c_{j},x]$ from $c_i$ to $c_{j}$, which will decrease $b_i$ and increase $b_{j}$. Let $\Delta$ be the amount by which $b_i$ decreases; then $b_{j}$ increases by at most $\Delta$. Therefore the vector $b$ changes only in two coordinates, from $(b_i,0)$ to $(b_i-\Delta, \Delta)$. By convexity and symmetry of the norm (cf. \cref{lem:aggregation}), moving mass from a coordinate to a $0$-valued one cannot increase the norm; hence $\|b\|$ does not increase, while $\|a\|$ is unchanged. This contradicts optimality unless $\Delta=0$, i.e., the cluster of $c_i$ does not extend past $c_{j}$. Thus, we may assume $\pi^*$ is uncrossed and no cluster contains another center.
\end{proof}

Under an uncrossed, no-inclusion assignment, there is a boundary point $p_i$ between each consecutive pair of centers $c_i,c_{i+1}$ such that
all requests $\le p_i$ (in that local region) are assigned to $c_i$ and all requests $\ge p_i$ are assigned to $c_{i+1}$.
Let $m_i:=\frac{c_i+c_{i+1}}{2}$ be the midpoint (the nearest-neighbor boundary).

We now modify $\pi^*$ by shifting each boundary $p_i$ to $m_i$, obtaining the nearest-neighbor assignment $\pi^{\mathrm{nn}}$.
Consider a fixed $i$. Shifting the boundary from $p_i$ to $m_i$ only moves requests between $c_i$ and $c_{i+1}$; hence it can only affect
$b_i$ and $a_{i+1}$.
If $p_i\le m_i$, then $c_i$ gains some requests in $[p_i,m_i]$, so only $b_i$ may increase, and the increase is at most $m_i-p_i$.
If $p_i\ge m_i$, then $c_{i+1}$ gains some requests in $[m_i,p_i]$, so only $a_{i+1}$ may increase, and the increase is at most $p_i-m_i$.
In both cases,
\[
|p_i-m_i|\ \le\ \max\{\,b_i^*,\ a_{i+1}^*\,\},
\]
because $p_i$ lies between the rightmost request served by $c_i$ and the leftmost request served by $c_{i+1}$, and the no-inclusion property ensures these extremal requests sit on the correct sides of the centers.

Therefore, after shifting all boundaries, we obtain vectors $a^{\mathrm{nn}}:=a(\pi^{\mathrm{nn}})$ and $b^{\mathrm{nn}}:=b(\pi^{\mathrm{nn}})$ satisfying the coordinatewise bounds
\[
b_i^{\mathrm{nn}} \le b_i^* + \max\{b_i^*, a_{i+1}^*\}\ \le\ 2b_i^* + a_{i+1}^*,
\qquad
a_{i+1}^{\mathrm{nn}} \le a_{i+1}^* + \max\{b_i^*, a_{i+1}^*\}\ \le\ b_i^* + 2a_{i+1}^*.
\]
By monotonicity and the triangle inequality of $\|\cdot\|$,
\begin{equation*}\label{eq:one-sided-constant}
\|a^{\mathrm{nn}}\|+\|b^{\mathrm{nn}}\|\ \le\ 3\bigl(\|a^*\|+\|b^*\|\bigr).
\end{equation*}

Thus, we overall obtain
\[
\|\rad(\pi^{\mathrm{nn}})\|
\ \le\ \|a^{\mathrm{nn}}+b^{\mathrm{nn}}\|
\ \le\ \|a^{\mathrm{nn}}\|+\|b^{\mathrm{nn}}\|
\ \le\ 3\bigl(\|a^*\|+\|b^*\|\bigr)
\ \le\ 6\,\|a^*+b^*\|
\ \le\ 12\,\|\rad(\pi^*)\|.
\]
\end{proof}

We now show that on general metrics, greedy assignment could cost $\Theta(k)$ times as much as the optimal assignment.

\begin{theorem}
    On general metrics, there exist instances and a norm such that $\|\rad(\pi^{\mathrm{nn}})\| = \Omega(k)\cdot \min_{\pi}\|\rad(\pi)\|$. For all instances, $\|\rad(\pi^{\mathrm{nn}})\| = O(k)\cdot \min_{\pi}\|\rad(\pi)\|$.
\end{theorem}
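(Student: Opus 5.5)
The statement has two parts, and I would handle them separately: an explicit bad instance for the $\Omega(k)$ lower bound, and a charging argument for the $O(k)$ upper bound. For both I would use the $\ell_1$ norm (sum of radii) as the guiding case. I would then check that the upper bound argument only uses monotonicity, symmetry and the triangle inequality.

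\emph{Lower bound.} The phenomenon to exploit is that the nearest-center assignment may split a tight group of requests among many centers. The optimal assignment instead gives the whole group to one center and leaves the others with radius zero. Concretely, I would take a star-like metric: a hub point $h$ and $k$ centers $c_1,\dots,c_k$ with $d(c_i,h)=1$. Put one request $x_i$ at each $c_i$'s "private" side, so that $d(x_i,c_i)=1-\delta$ and $d(x_i,c_j)=1$ for $j\ne i$; for instance, attach each $x_i$ to the hub and to $c_i$ via a shortest-path metric. Also put one request at $h$ itself, and arrange that $h$ is slightly closer to, say, $c_1$.

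This does not yet give a gap, so I would rather aim for the following property:
\begin{itemize}
\item all requests lie within distance $1$ of a single center $c_1$, so assigning everything to $c_1$ costs $\|(1,0,\dots,0)\|$;
\item each request $x_i$ is strictly closer to $c_i$ than to $c_1$, yet still at distance about $1$ from $c_i$.
\end{itemize}
Then nearest-center assignment pays $\|(1,\dots,1)\|_1=k$, while the optimum pays $1$. Such a metric exists as a graph metric: take a vertex $c_1$, vertices $x_2,\dots,x_k$ at distance $1$ from $c_1$, and vertices $c_i$ joined only to $x_i$ with edge length $1-\delta$. The triangle inequality holds automatically in shortest-path metrics, and $d(x_i,c_j)\ge 1$ for $j\ne i$.

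\emph{Upper bound.} Fix an optimal assignment $\pi$ with radii $\rho$. For every request $x$, its nearest center $c$ satisfies $d(x,c)\le d(x,\pi(x))\le \rho_{\pi(x)}$. Hence every coordinate of $\rad(\pi^{\mathrm{nn}})$ is at most $\max_j\rho_j=\|\rho\|_\infty$. By monotonicity,
\[
\|\rad(\pi^{\mathrm{nn}})\|\le \|\rho\|_\infty\,\|\mathbf 1_k\|.
\]
It remains to show $\|\rho\|_\infty\|\mathbf 1_k\|\le k\|\rho\|$ for a symmetric norm. By symmetry and the triangle inequality, $\|\mathbf 1_k\|\le k\|e_1\|$. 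Also $\|\rho\|\ge\|\rho_{j^*}e_{j^*}\|=\|\rho\|_\infty\|e_1\|$, using monotonicity and symmetry with $j^*$ the index of the maximum. Combining these gives the $O(k)$ bound.

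\emph{Main obstacle.} The upper bound is routine. The main obstacle is verifying that the lower bound metric is genuinely a metric with the required strict nearest-center preferences. I would settle this via the graph construction above, checking distances directly with small $\delta>0$.
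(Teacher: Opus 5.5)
Your proposal is correct and follows essentially the same route as the paper: the lower bound is the paper's star instance (hub center $c_1$, the other centers on the leaves, each request slightly closer to its leaf center than to the hub), and the upper bound is the paper's argument that every nearest-center radius is at most the largest optimal radius, combined with $\|\mathbf 1_k\|\le k\|e_1\|$ by symmetry. One small slip: in your instance $c_1$'s nearest-center cluster is empty, so the nearest-center $\ell_1$ cost is $(k-1)(1-\delta)$ rather than $k$; this still gives the $\Omega(k)$ gap.
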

\begin{proof}
    We first show an $O(k)$ upper bound. Let $D = \max_{r \in R} d(r, C)$. Then the radii of the greedy clusters do not exceed $D$, and there is at least one radius in the optimal clustering that is at least $D$. Hence,
    \[
    \|\rad(\pi^{\mathrm{nn}})\| \le D \|1\| \le k D \|(1, 0, \ldots, 0)\| \le k\|\rad(\pi^*)\|.
    \]

We now give an example reaching the upper bound in \cref{fig:star-lower-bound}. Consider the shortest-path metric of a star with $k-1$ branches. Each leaf is at a distance of $2$ from the hub.
Place $k-1$ centers $c_1,\dots,c_{k-1}$ on the leaves and $c_k$ on the hub. For each $i\in[k-1]$, place one request $r_i$ on branch $i$ at a distance of $1+\varepsilon$ from the hub $c_k$ (so $d(r_i,c_k)=1+\varepsilon$)
and hence at a distance of $1-\varepsilon$ from the leaf center $c_i$ (so $d(r_i,c_i)=1-\varepsilon$).
Then each $r_i$ is strictly closer to $c_i$ than to $c_k$. Therefore, the greedy assignment will assign $\pi(r_i) = c_i$, resulting in cost $(k-1)(1-\varepsilon)$. The optimal assignment, however, clusters all requests with $c_k$ with cost $1+\varepsilon$, and thus the ratio tends to $k-1$.
\end{proof}
\begin{figure}[t]
\centering
\begin{tikzpicture}[scale=1.0, every node/.style={font=\small}]
  \node[draw, circle, inner sep=1.3pt] (hub) at (0,0) {$c$};

  \def\L{2}      %
  \def\mid{1.1}    %
  \def\sh{1pt}   %

  \foreach \i/\ang in {1/90,2/18,3/-54,4/-126,5/162} {
    \coordinate (leaf\i) at ({\L*cos(\ang)},{\L*sin(\ang)});
    \coordinate (req\i)  at ({\mid*cos(\ang)},{\mid*sin(\ang)});

    \node[draw, circle, inner sep=1.3pt] (C\i) at (leaf\i) {};

    \node (R\i) at (req\i) {$\times$};

    \draw[shorten >=\sh, shorten <=\sh] (hub) -- (C\i);
  }

\end{tikzpicture}
\caption{Star-metric lower-bound instance. Circles are centers, and crosses are requests. $d(c,c_i)=2,\quad d(r_i,c)=1+\varepsilon,\quad d(r_i,c_i)=1-\varepsilon$.}
\label{fig:star-lower-bound}
\end{figure}

\section{(A)symmetric Norm-of-Radii Objective}
\label{appendix: asymm}
In the analysis of our algorithm \radAggrDC, we needed symmetry to permute the coordinates of the cost vector. We now explain that, in order to consider asymmetric norms, something fundamentally different from the Double Coverage algorithm is required. In particular, algorithms like Double Coverage will never swap the relative order of two centers, which may be problematic in the weighted case.

Consider the following example with 2 centers and the norm being $B\cdot \rad_1 + \rad_2$ for some large $B$. The expensive center $c_1$ starts on the left. Consider three requests with coordinates $0, 1, B$. The requests are repeated so many times that we only care about the clustering configuration that we converge to. An optimal offline algorithm will switch the locations of the two centers to make the cheap $c_2$ cluster $\{0, 1\}$ and the expensive $c_1$ cluster $B$ alone, yielding a total clustering cost of $1/2$. If we do not switch (as in the Double Coverage algorithm) and keep the expensive center on the left, then we pay at least $\Theta(B)$: either the expensive center clusters $\{0, 1\}$, or the cheap center clusters $\{1, B\}$, which could be arbitrarily more expensive and result in an unbounded competitive ratio.

\end{document}